\newcommand{\upartial}[0]{\partial}
\newcommand{\R}{\mathbb{R}}
\newcommand{\pprob}[1]{\mathrm{P} \parent{{#1}}}
\newcommand{\transpose}[1]{{#1}^{\mathrm{T}}}
\newcommand{\expf}[1]{\mathrm{exp}\left ( {#1}\right )}
\newcommand{\expfs}[1]{\mathrm{e}^{#1}}
\newcommand{\parent}[1]{\left( {#1} \right)}
\newcommand{\bracket}[1]{\left [ {#1} \right ]}
\newcommand{\absval}[1]{\left| {#1} \right|}
\newcommand{\sset}[1]{\left\lbrace {#1} \right\rbrace }
\newcommand{\arrayh}[1]{\left[ {#1} \right] }
\newcommand{\ssum}[2]{\displaystyle\sum\limits_{#1}^{#2}}
\newcommand{\vvec}[1]{\bm{{#1}}}
\newcommand{\volavec}[0]{\vvec \sigma}
\newcommand{\vvecc}[2]{{#1}_{#2}}
\newcommand{\mmat}[1]{\mathbf{{#1}}}
\newcommand{\mmatc}[3]{{#1}_{#2 #3}}
\newcommand{\ttens}[1]{\bm{{#1}}}
\newcommand{\ddif}[0]{{\rm{d}}}
\newcommand{\bigo}[1]{\mathcal O \parent{{#1}}}
\newcommand{\dbar}[1]{\overline{\overline{{#1}}}}
\newcommand{\startident}{\parent{\vvec x_0}}
\newcommand{\pone}[0]{\mmat{P}_1}
\newcommand{\efvec}[0]{\vvec X}
\newcommand{\eft}[1]{\efvec \parent{{#1}}}
\newcommand{\poneft}[1]{\pone \eft{{#1}}}
\newcommand{\portf}{S_1}
\newcommand{\portft}[1]{\portf \parent{#1}}
\newcommand{\pport}{\prq S^{\startident}}
\newcommand{\pportt}[1]{\pport \parent{{#1}}}
\newcommand{\prq}[1]{\overline{{#1}}}
\newcommand{\pdrift}{\prq a^{\startident}}
\newcommand{\pvola}{\prq b^{\startident}}
\newcommand{\discount}[2]{\expf{-r \parent{{#2}-{#1}} }}
\newcommand{\discountzero}[1]{\expf{-r {#1}}}
\newcommand{\discountzeroreverse}[1]{\expf{r {#1}}}
\newcommand{\stoptimeb}[0]{{\prq \tau}^\dagger }
\newcommand{\atime}[0]{q}
\newcommand{\gmartingale}[0]{R}
\newcommand{\bbt}[0]{{\mmat{b} \transpose{\mmat{b}}}}
\newcommand{\pbbt}[0]{\pone \bbt \transpose{\pone}}
\newcommand{\mathoper}[1]{{\mathop{\mathrm{{#1}}}}}
\newcommand{\expp}[1]{\mathrm{E} \bracket{{#1}}}
\newcommand{\ito}{It\^o\  }
\definecolor{darkblue}{RGB}{0,0,139}
\definecolor{darkgreen}{RGB}{0,139,0}
\definecolor{darkred}{RGB}{139,0,0}
\definecolor{sprgreen}{RGB}{0,250,154}
\definecolor{viol}{RGB}{199,21,133}
\definecolor{royblue}{RGB}{65,105,225}
\definecolor{navy}{RGB}{102,153,255}
\definecolor{tuerkis}{RGB}{51,153,204}
\theoremstyle{plain}
\newtheorem{theorem}{Theorem}[section]
\newtheorem{lemma}[theorem]{Lemma}
\newtheorem{corollary}[theorem]{Corollary}
\theoremstyle{definition}
\newtheorem{definition}[theorem]{Definition}
\newtheorem{assumption}[theorem]{Assumption}
\newtheorem{remark}[theorem]{Remark}
\begin{document}

\author[C. Bayer]{Christian Bayer}
\address{Weierstrass Institute, Mohrenstrasse 39, 10117 Berlin, Germany}
\email{christian.bayer@wias-berlin.de}

\author[J. H\"app\"ol\"a]{Juho H\"app\"ol\"a}
\address{CEMSE, King Abdullah University of Science and Technology (KAUST), Thuwal 23955-6900, Saudi Arabia}
\email{juho.happola@kaust.edu.sa}

\author[R. Tempone]{Raul Tempone}
\address{CEMSE, King Abdullah University of Science and Technology (KAUST), Thuwal 23955-6900, Saudi Arabia}
\email{raul.tempone@kaust.edu.sa}


\title{
Implied Stopping Rules for American Basket Options from Markovian Projection
} 


\keywords{
Basket Option, Optimal Stopping, Black-Scholes, Error bounds, Monte Carlo, Markovian Projection, Hamilton-Jabcobi-Bellman
}

\subjclass[2010]{Primary: 91G60; Secondary: 91G20,91G80}

\begin{abstract}  

This work addresses the problem of pricing American basket options in a multivariate setting, which includes among others, the Bachelier and the Black-Scholes models.
In high dimensions, nonlinear partial differential equation methods for solving the problem become prohibitively costly due to the curse of dimensionality.
Instead, this work proposes to use a stopping rule that depends on the dynamics of a low-dimensional Markovian projection of the given basket of assets.
It is shown that the ability to approximate the original value function by a lower-dimensional approximation is a feature of the dynamics of the system and is unaffected by the path-dependent nature of the American basket option.
Assuming that we know the density of the forward process and using the Laplace approximation, we first efficiently evaluate the diffusion coefficient corresponding to the low-dimensional Markovian projection of the basket.
Then, we approximate the optimal early-exercise boundary of the option by solving a Hamilton-Jacobi-Bellman partial differential equation in the projected, low-dimensional space.
The resulting near-optimal early-exercise boundary is used to produce an exercise strategy for the high-dimensional option,
thereby providing a lower bound for the price of the American basket option.
A corresponding upper bound is also provided.
These bounds allow to assess the accuracy of the proposed pricing method.
Indeed, our approximate early-exercise strategy provides a straightforward lower bound for the American basket option price.
Following a duality argument due to Rogers, we derive a corresponding upper bound solving only the low-dimensional optimal control problem.
Numerically, we show the feasibility of the method using baskets with dimensions up to fifty.
In these examples, the resulting option price relative errors are only of the order of few percent.
   
\end{abstract}

\maketitle

\section{Introduction}

This work addresses the problem of pricing American basket options in a multivariate setting.
Our approach relies on a stopping rule that depends on the dynamics of a low-dimensional Markovian projection of the given basket of assets.

Pricing path-dependent options is a notoriously difficult problem.
Even for relatively simple cases, such as the Black-Scholes model or the Bachelier model, in which an analytic expression of the risk-neutral expected payoff at a terminal time, $T$, can be found, prices of path-dependent options, such as American options, must typically be solved for numerically.
This difficulty is aggravated in high dimensions, where convergence rates of well-known numerical methods deteriorate exponentially as the number of dimensions increases.
However, there is a plethora of American options being offered in the markets, in publicly traded markets or over-the-counter (OTC).
Perhaps the best-known example is that of options written the S{\&}P-100 index quoted on the Chicago Board Options Exchange (CBOE).
In addition, the wide variety of exchange traded funds (ETF) tracking indices have American options written on them publicly quoted on CBOE.
These funds include many prominent indices such as Euro Stoxx 50 and the Dow Jones Industrial average, as well as many regional indices.
If one is interested in the index alone, then 
a low-dimensional model for the index is clearly sufficient. However, in many situations, consistent joint models of the index together with some or all the individual stocks may be required, which would lead 
to the moderate and high dimensional option pricing problems addressed in this paper.

The two most widely used approaches to pricing path-dependent options, binomial tree methods and partial differential equation (PDE) methods, both suffer from the so-called curse of dimensionality.
In the case of the probability trees or lattices, the size of the probabilistic trees, even in the case of recombining trees, already becomes prohibitively large in moderate dimensions.
The other popular method requires solving the Black-Scholes equation using finite difference (FD) or finite element (FEM) methods.
Both methods involve discrete differential operators whose size also scales exponentially in the number of dimensions.

In Monte Carlo simulation, the rate of convergence of weak approximations does not explicitly depend on the number of dimensions.
With early-exercise options like American ones, however, Monte Carlo methods become more complicated. 
Although well suited for forward-propagation of uncertainties in a wide range of models, traditional Monte Carlo methods do not offer a straightforward way to construct an exercise strategy.
Such a strategy typically needs to be obtained through backward induction.
Because the price of an American option is based on assuming optimal execution of the option, any solution scheme needs to produce the optimal stopping strategy as a by-product of the pricing method.
Many methods have been developed to produce a near-optimal execution strategy.
\citet{broadie1997pricing} introduced a pair of schemes that evaluate upper and lower bounds of the prices of American options.
\citet{longstaff2001valuing} used least-squares regression in conjunction with Monte Carlo simulation to evaluate the price of American options. Their popular method has been widely implemented in various pricing engines, for example in the QuantLib library by \citet{Ame2003}.

In the least-squares Monte Carlo methodology, the value of holding an option is weighed against the cash flow captured by exercising the option.
The intrinsic value of an option is, of course, known. However, the holding price is the discounted expectation of possible future outcomes.
This expectation is estimated based on a Monte Carlo sample by regressing the holding price of the option to a few of decision variables or basis functions.
Naturally, the choice of the appropriate basis functions has a crucial effect on the quality of the outcome, and also the number of basis functions should be much smaller than the size of the Monte Carlo sample to avoid overfitting\citep{glasserman2004number,zanger2013quantitative,zanger2016convergence}.
For work on the reduction of the computational complexity in the regression methods, we refer the reader to \citet{belomestny2015pricing}.

Another method to approximate option prices in high dimensions is the optimal quantizer approach of \citet{bally2005quantization}.
In this method the diffusion process is projected to a finite mesh. This mesh is chosen optimally to minimize projection error, the conditional expectation describing the holding price is then evaluated at each of the mesh points.
The quantization tree approach gives accurate approximations of the option price in moderate dimension.
Here, we present methods for selected parametrisations of the Black-Scholes model over twice the dimension presented in \citep{bally2005quantization} For work with rather large number of dimensions, we refer the reader to the stratified state aggregation along payoff (SSAP) method of \citep{barraquand1995numerical}.
In the SSAP method, one solves for an exercise strategy through stratifying possible values of the intrinsic value of the option.
\citet{andersen1999simple} used a similar approach for pricing Bermudan swaptions, characterizing the early exercise boundary in terms of the intrinsic value.

Here, we propose and analyze a novel method for pricing American options written on a basket of assets.
Like the SSAP, the pricing method in this work relies on using the intrinsic value, or the value of the underlying asset as a state variable. 
On the other hand, our method is based on the Markovian projection of the underlying asset, does not rely on the use of basis functions
and provides upper and lower bounds for the option price.
These bounds are useful to assess the accuracy of our methodology.

In this exploratory work, we computationally study the feasibility of using stopping rules based on a simplified surrogate process in pricing American options written on a basket of assets.
The method offers an efficient approximation to pricing and hedging American options written on an index, or a security tracking such index.
Instead of the full-dimensional process, we use a lower-dimensional process obtained through Markovian projection.
Even though the evolution of the multiple assets involved in a given basket is usually assumed Markovian,
the SDE describing the evolution of a linear combination or a basket of assets, is rarely Markovian in the basket value.
We address this issue by means of Markovian projection, which provides a low-dimensional Markovian SDE that is suited to dynamic programming (DP) methods that solve the relevant Hamilton-Jacobi-Bellman (HJB) equation.
Markovian projection techniques have been previously applied to a range of financial applications, see, for example, \citep{piterbarg2003stochastic,piterbarg2005stochastic,djehiche2014risk}.

\paragraph{Outline}
The remainder of this work is organized as follows.
In Section \ref{sec:MarPro}, we describe the Markovian projection in the context of projecting high-dimensional SDEs into lower dimensions.
We show how the low-dimensional HJB equation gives rise to a stopping rule that in general is sub-optimal but provides a lower bound for the American option price.
Using a duality approach from \citet{rogers2002monte}, we give an upper bound for the option price using the solution of the low-dimensional HJB equation.
We show that in the Bachelier model, the lower and upper bounds coincide and provide an exact option valuation.
We prove how the question of whether the cost-to-go function of an American option can be approximated using a low-dimensional approximation reduces to the corresponding question of European options, which are simpler to analyze.
It is known that the Bachelier model is a close approximation to the Black-Scholes model in the realm of European option pricing \citep{schachermayer2008close}.
We motivate that this approximation has a beneficial effect when pricing American basket options with our methodology since our method is exact for the Bachelier model.
In Section \ref{sec:implementation}, we detail the numerical implementation of the ideas developed in the preceding section and experiment with multivariate Bachelier and Black-Scholes models.
Reporting results of numerical experiments, we verify the accuracy of our method with the Bachelier model and give supporting results to justify the use of our method in cases where neither the European or American option prices can be precisely represented using a low-dimensional approximation.
Using the Black-Scholes model as an example, we show that the approximation error of our method is few per cent, comparable to the bid-ask spread of even the more liquid openly traded options and well within the spread of more illiquid index options or options quoted on an ETF.
Finally, we offer concluding remarks in Section \ref{sec:conclusions}.

\section{Markovian projections and implied stopping times}
\label{sec:MarPro}

In this section, we revisit the essential equations that describe risk-neutral option pricing of American options in a multivariate setting.
We present in Section \ref{subsec:markovianbasics} how these equations have corresponding low-dimensional projections that can be obtained using the Markovian projection.
In Section \ref{subsec:bounds}, we show how the projected PDEs give rise to lower and upper bounds for the solution of the original high-dimensional pricing problem.

Following the introduction of the relevant bounds, we discuss in Section \ref{subsec:DimRed} 
classes of models that are of particular interest in reduced-dimension evaluation.
First, we recall in Lemma \ref{lem:bachelierReduction} how the Gaussian Bachelier model has the feature that the Markovian projection produces a one-dimensional SDE whose solution coincides in law with the underlying high-dimensional portfolio.
We also show in Corollary \ref{cor:stochclock} how this one-dimensional approximation property is preserved if the Bachelier model is generalized through the appropriate introduction of a stochastic clock.
Secondly, we provide auxiliary results to characterize some \ito SDEs that have this exact reduced dimension structure that our proposed method exploits.
Among these ancillary results, we have Lemma \ref{lem:european}, which we use to reduce the discussion of dimension reduction of American options into the problem of analyzing low-dimensional approximations of the corresponding European option.
Furthermore, we give a motivation for using the Markovian projection even for models that do not have the exact reduced dimension property.

\subsection{Markovian projections and approximate stopping times}

\label{subsec:markovianbasics}

Assume that the time evolution of the 
asset prices in the basket is given by a stochastic process in $\R^d$, $\efvec(t,\omega)$,
that is the unique strong solution to
an It\^o SDE,
\begin{align}
\begin{split}
\ddif \eft t &= 
\vvec a \parent{t,\eft t} \ddif t + \mmat{b} \parent{t,\eft t} 
\ddif \vvec W \parent t, ~~~~0< t <T,
\\
\eft 0 &= \vvec x_0,
\end{split}
\label{eq:firstFullSDE}
\end{align}
which is driven by a $k$-dimensional Wiener process with independent components, $\vvec W$. 
We work under the risk-neutral measure and due to a no arbitrage assumption, the drift in \eqref{eq:firstFullSDE} is a linear function,
\begin{align}
\vvec a \parent{t, \vvec x} = r \vvec x,
\label{eq:riskNeutralDrift}
\end{align}
where $r \in \mathbb R$ is the short rate.
Most of the discussion can also be generalized with minimal modifications to a time-dependent, stochastic, short rate when the short rate process is independent of the dynamics of the underlying assets, see Remark \ref{rem:stochIntRates}.
For $ 1\le i\le d$ and $1\le j\le k$ the diffusion coefficients, 
$\mmatc b i j $,
are at least second order differentiable functions and such that the pdf of $\eft t$ exists for $0<t\le T$ and is a univariate, smooth function, cf. Assumption \ref{ass:knownDensity}.
Furthermore, we denote the canonical filtration generated by $\eft t $ as
$
\mathcal F_t = \sigma \sset{
{\eft \atime } : 0 \leq \atime \leq t}.
$

In the numerical examples in the subsequent section, we directly deal with the models of Bachelier \citep{sullivan1991louis} and Black-Scholes \citep{black1973pricing}, acknowledging possible extensions to the constant elasticity of variance (CEV) model (see \citet{cox1975notes}) that can in a certain sense be understood as a compromise between the Bachelier and Black-Scholes models.
Many other extensions are also possible, and we discuss some of them in Section \ref{subsec:DimRed}.
Note the time-homogeneous structure of the examined models and recognize possible extensions to time-inhomogeneous models, for instance by using temporal reparametrization. 

Furthermore, we assume for simplicity that the underlying pays no dividends.
This work focuses extensively on models of Bachelier and Black-Scholes type.
They are defined by their respective volatilities, namely
\begin{align}
\mmat b_{\text{Bachelier}} \parent{t,\vvec x} &= \mmat \Sigma ,
\label{eq:BachelierDefinition}
\\
b_{\text{Black-Scholes},ij} \parent{t,\vvec x} &= \vvecc x i \Sigma_{ij} .
\label{eq:BlackScholesDefinition}
\end{align}
with $\mmat \Sigma \in \mathbb R^{d\times k}$ in both models.

We focus on a portfolio of assets, $\portf$, given by weights $\pone$,
\begin{align}\label{eq:Sdef}
\portft {t} = \poneft t,
\end{align}
as the underlying security, for $\pone \in \R^{1 \times d}$, with non-zero elements, possibly some but not all negative.
We seek to price options with the payoff functional $g : \mathbb R \rightarrow \mathbb R$.
Arguably, the most interesting example is that of the put option, $g \parent {s} = \parent{K-s}^+$ for some $K \in \mathbb R$.

The price of the European option written on the portfolio $\pone$ with expiry at $T$ is given by
\begin{align}
\label{eq:europeanFullPde}
u_E \parent{t,\vvec x} =\expp{ \discount t T g \parent{\poneft T } | \eft t = \vvec x}.
\end{align}

In contrast, when pricing American options, we seek to solve for
\begin{align}
\begin{split}
u_A \parent{t,\vvec x} &= \mathop{\mathrm{sup}}_{\tau \in \mathcal T_t} \expp{ \discount t \tau g \parent{\poneft \tau} |\eft t = \vvec x},
\\
\mathcal T_{\atime} &= \sset{\tau : \Omega \rightarrow [\atime ,T] | 
\sset{ \tau \leq t} 
\in \mathcal F_t,~~ \forall t \in [\atime,T]}.
\end{split}
\label{eq:americanPriceDefinition}
\end{align}
The European option price $u_E$ given by \eqref{eq:europeanFullPde} also satisfies the Black-Scholes equation in \mbox{$\parent{t, \vvec x} \in [0,T]\times D$,}
\begin{align}
\begin{split}
-\upartial_t u_E \parent{t,\vvec x} 
=& -r u_E \parent{t,\vvec x} 
 +\ssum{i}{} \vvec a_i \parent{t, \vvec x} \upartial_{\vvecc x i} u_E \parent{t,\vvec x} + \frac{1}{2}\ssum{ij}{} \parent{\mmat b \transpose{\mmat b}}_{ij}\parent{t,\vvec x} \upartial^2_{\vvecc x i \vvecc x j} 
u_E \parent{t,\vvec x} 
\\
u_E \parent{T,\cdot} =& g \parent{\pone \cdot},
\end{split}
\label{eq:theFullDimensionEuropeanEquation}
\end{align}
with the appropriate domain $D \subset \mathbb R^d$. For example, in the Black-Scholes model, we have \mbox{$D = D_{BS}^d = \mathbb{R}^d_+$} with the appropriate Dirichlet boundary condition at hyperplanes at which one or more components of $\eft t$ are zero.
The boundary value is given by a lower-dimensional version of \eqref{eq:theFullDimensionEuropeanEquation}.
Defining the second order linear differential operator
\begin{align*}
\parent{\mathcal L  v } \parent{t,\vvec x}
=
\parent{-r + \ssum{i}{} \vvec a_i \upartial_{\vvecc x i} + \frac{1}{2} \ssum{ij}{} \parent{\mmat b \transpose{\mmat b}}_{ij} \upartial^2_{\vvecc x i \vvecc x j} } \parent{t,\vvec x}
v \parent{t,\vvec x} ,
\end{align*}
we can write the corresponding non-linear HJB equation. Following the presentation of \citet[Equation (6.2)]{achdou2005computational},
the American option price, $u_A$, satisfies
\begin{align}
\parent{\mathcal{L} u_A + \upartial_t u_A} \parent{t, \vvec x} &\leq 0,
~~ &\parent{t, \vvec x} \in [0,T] \times D,
\nonumber
\\
u_A \parent{t, \vvec x} &\geq g \parent{\pone \vvec x},
 ~~&\parent{t, \vvec x} \in [0,T] \times D,
 \nonumber
 \\
 \parent{\parent{\mathcal{L} u_A + \upartial_t u_A} \parent{t, \vvec x}} \parent{u_A \parent{t,\vvec x}-g \parent{\pone \vvec x}}
 &= 0,
 ~~&\parent{t, \vvec x} \in [0,T] \times D.
 \nonumber
\end{align}
Introducing the Hamiltonian, 
\begin{align}
\parent{\mathcal H u_A} \parent{t, \vvec x} = \parent{\mathcal L u_A } \parent{t, \vvec x} \mathbf 1_{\mathoper{max} \parent{ \parent{\mathcal L u_A}\parent{t,\vvec x}, ~ u_A \parent{t, \vvec x} - g \parent{\pone \vvec x}} > 0} ,
 \label{eq:hamiltonian}
\end{align}
we write the HJB equation for $u_A$ shortly as
\begin{align}
\begin{split}
-\upartial_t u_A \parent{t, \vvec x} & = \parent{ \mathcal H u_A } \parent{t, \vvec x}, ~~ \parent{t, \vvec x} \in [0,T] \times D,
\\
u_A \parent{t, \cdot} &= g \parent{\pone \cdot}. 
\end{split}
\label{eq:fullAmericanEquation}
\end{align}
For the Bachelier model, $D$ is unbounded.
For the Black-Scholes model, one or more components of $\eft t$ vanish at the boundary $\upartial D$.
Since both the drift \eqref{eq:riskNeutralDrift} and the volatility \eqref{eq:BlackScholesDefinition} are linear in their arguments, the drift and the volatility vanish at the boundary.
Resulting boundary value is thus given by a lower-dimensional variant of \eqref{eq:fullAmericanEquation} where one or more of the components of $\eft t$ are fixed to zero.

Instead of trying to solve \eqref{eq:fullAmericanEquation} directly, 
we first turn our attention to a low-dimensional approximation of the portfolio process $\portf$ introduced in \eqref{eq:Sdef}.
This approximation is the Markovian projection of $\portf$ \citep{gyongy1986mimicking,piterbarg2006markovian}.
Indeed, we approximate the non-Markovian evolution of $\portf$
by the following surrogate process,
\begin{equation}
\begin{split}
\ddif \pportt t =& 
\pdrift \parent{t,\pportt t} \ddif t 
+ \pvola \parent{t,\pportt t} \ddif W \parent t , ~~~~ t \in [0,T],
\\
\pportt 0 =& \pone \vvec x_0, 
\end{split}
\label{eq:surrogate}
\end{equation} 
The drift and volatility coefficients in \eqref{eq:surrogate} are evaluated through conditional expectations, namely
\begin{align}
\pdrift \parent{t, s} 
=& \expp{\pone a \parent{t,\eft t} |\poneft t = s,~\eft 0 = \vvec x_0 },
\label{eq:baradef}
\\
\parent{\pvola}^2 \parent{t, s} 
=& \expp{ \parent{ \pone \mmat b \transpose{\mmat b } \transpose{\pone}} \parent{t, \eft t} |\poneft t = s, ~\eft 0 = \vvec x_0 }.
\label{eq:barbdef}
\end{align}
The Markovian projection \eqref{eq:surrogate} generates its canonical filtration,
$
\prq {\mathcal F}_t = \sigma \sset{\pportt \atime : 0 \leq \atime \leq t}.
$

Observe that the surrogate process, $\pportt t$ in \eqref{eq:surrogate}, has, due to the proper selection of the drift and volatility functions and the appropriate initial value, the same marginal density as \mbox{$S \parent t = \poneft t $ for all $t \in [0,T]$} \citep{gyongy1986mimicking}. For any given payoff function $g$ that yields a finite price in \eqref{eq:europeanFullPde}, this implies the identity 
\begin{align}
\expp{ \expf{-rT} g \parent{\poneft T} | \eft 0 = \vvec x_0}
=
\expp{ \expf{-rT} g \parent{\pportt T} | \pportt 0 = \pone \vvec x_0},
\label{eq:coincidingExpectations}
\end{align}
which means that we can price European options on the basket using only our knowledge of the Markovian process $\pport$.

Assuming that we know the dynamics \eqref{eq:surrogate}, we can evaluate the right-hand side of
\eqref{eq:coincidingExpectations} using the Feynman-Kac Formula.
By denoting
\begin{align}
\label{eq:valueFunctionDefinition}
\prq u_E \parent{t,s} = \expp{\expf{-r\parent{T-t}} g \parent{\pportt t} | \pportt t=s},
\end{align}
we have that $\prq u_E $ solves a corresponding linear backward PDE in one space dimension only, 
\begin{align}
\begin{split}
- \upartial_t \prq u_E \parent{t,s} &= \underbrace{-r \prq u_E \parent{t,s} 
+ \pdrift \parent{t,s} \upartial_{s} \prq u_E \parent{t,s} + \frac{\parent{\pvola}^2 \parent{t,s}}{2} \upartial^2_{ss} \prq u_E \parent{t,s}}_{\equiv \parent{ \prq{ \mathcal L} \prq u_E}\parent{t,s}},
 ~~ t \in [0,T], ~ s \in \prq D,
\\
\prq u_E \parent{T,\cdot} &= g \parent{ \cdot}.
\end{split}
\label{eq:theReducedEuropeanEquation}
\end{align}

\begin{remark}[Interpretation of projected PDEs]
We have defined the projected PDE \eqref{eq:theReducedEuropeanEquation} that is of Black-Scholes type.
Furthermore, the coefficients $\pdrift$ and $\pvola$ of the equation are constructed through conditioning to the initial value of the SDE \eqref{eq:firstFullSDE}.
Here, we use the the PDE \eqref{eq:theReducedEuropeanEquation} as a mathematical construct to evaluate the expectation \eqref{eq:coincidingExpectations}.
We do not interpret the solution of \eqref{eq:theReducedEuropeanEquation}, or its extensions defined in the remainder of this work as tradeable option prices.
\end{remark}

Note that the procedure above can be generalized to cases where the Markovian projection is carried out onto a space of dimension $\prq d >1.$
This is done simply by introducing additional portfolios and their weights,
$\transpose{\mmat P} = [\transpose{\mmat P_1},\transpose{\mmat P_2},\transpose{\mmat P_3},\dots,\transpose{\mmat P_{\prq d}}]$,
and defining the multidimensional dynamics for $\pport$ via the projected volatility coefficients as
\begin{align}
\label{eq:highDimProjection}
\parent{\mmat b \transpose{\mmat b}}^{\startident}_{ij} \parent{t, \vvec s} 
=& \expp{\parent{\transpose{\mmat{P}_i} \mmat b \transpose{\mmat b} \mmat{P}_j} \parent{t, \eft t} | \mmat P \eft t = \vvec s, ~ \eft 0 = \vvec x_0}, ~1\le i,j\le \prq d 
.
\end{align}
Summing up, as long as we can efficiently evaluate the coefficients in the SDE \eqref{eq:surrogate}, it is possible to solve the low-dimensional Equation \eqref{eq:theReducedEuropeanEquation} instead of Equation \eqref{eq:theFullDimensionEuropeanEquation} that suffers from the curse of dimensionality. Obviously, the efficient evaluation of the coefficients in the SDE of $\pport$ via conditional expectation as in \eqref{eq:barbdef} is in principle a daunting task.
Section \ref{subsubsec:laplace} proposes an efficient approximation to carry out this evaluation.

\begin{remark}[Computational domains and boundary conditions]
\label{rem:boundary}
Instead of using the full unbounded domain of the PDE \eqref{eq:theReducedEuropeanEquation} in the numerical part of this work, we use a modified, computational domain, on which we impose an artificial boundary condition as follows.

First, note that the appropriate domain, $D$, for \eqref{eq:theFullDimensionEuropeanEquation} depends on the model of choice.
For the $d$-dimensional Black-Scholes model, we have $D=D^d_{\mathoper{Black-Scholes}}=\mathbb R^d_+$ and correspondingly for the Bachelier model, $D=D^d_{\mathoper{Bachelier}}= \mathbb R^d$.
When numerically solving the full, $d$-dimensional Equation \eqref{eq:theFullDimensionEuropeanEquation}, one often truncates the domain into a compact one and imposes artificial boundary conditions on the boundary of the localized computational domain.
Here, we also truncate the projected domain, $\prq D$, into a localized computational domain.
At the boundary of the computational domain, we impose the artificial boundary condition $\bar u \parent{t,s} = g \parent{s}$.
In addition to the truncation, we note that the coefficients in \eqref{eq:theReducedEuropeanEquation} are defined only for regions where the density $\phi$ of process $\poneft t$ has support.
We extend artificially the domain of \eqref{eq:theReducedEuropeanEquation} to the rectangle $[0,T] \times [s_{\mathoper{min}}, s_{\mathoper{max}}]$ by extrapolating the relevant coefficients $\pdrift$ and $\parent{\pvola}^2$.
For $\parent{\pvola}^2$ we also set a lower bound to guarantee numerical stability and well-posedness.

In all our numerical examples, we make sure that our truncated and extrapolated computational domain is sufficiently large to make the corresponding domain truncation error negligible.
For more in-depth discussions on this matter, we refer the reader to \citep{kangro2000far, choi2001numerical, matache2004fast, hilber2004sparse}.

Furthermore, to maintain brevity of notation, we will refrain from writing explicitly the artificial boundary conditions. All relevant PDEs in this work are understood to be numerically solved using Dirichlet boundary conditions implied by the intrinsic value of the option.
\end{remark}

Just as the Black-Scholes equation, \eqref{eq:theFullDimensionEuropeanEquation} has a corresponding HJB equation \eqref{eq:fullAmericanEquation}, we may use the corresponding HJB to the projected Black-Scholes equation \eqref{eq:theReducedEuropeanEquation}.
The resulting HJB equation describes the cost-to-go function $\prq u_A$ of an American option written on the portfolio that has the projected dynamics of \eqref{eq:surrogate}:
\begin{align}
\begin{split}
- \upartial_t \prq u_A \parent{t,s} &=
\parent{\prq {\mathcal L} \prq u_A } \parent{t, s} 
 \mathbf 1_{\mathoper{max} \parent{ \parent{\prq{\mathcal L} \prq u_A}\parent{t, s}, ~ \prq u_A \parent{t, s} - g \parent{s}} > 0} = \parent{\prq{\mathcal H} \prq u_A} \parent{t, s} 
~~~ \parent{t,s} \in [0,T] \times \prq D,
 \\
\prq u_A \parent{T,\cdot} &= g \parent{\cdot}. 
\end{split}
\label{eq:projectedAmericanBackwardEquation}
\end{align}
However, for American option prices, there is no
identity corresponding to equality \eqref{eq:coincidingExpectations}. 
As a result, the magnitude of the difference $ \absval{ \prq u_A \parent{0, \pone \vvec x_0}- u_A \parent{0 , \vvec x_0} }$ may not necessarily be small.
Also, the boundary conditions in \eqref{eq:projectedAmericanBackwardEquation} are subject to the same ambiguity as the ones of \eqref{eq:theReducedEuropeanEquation} discussed in Remark \ref{rem:boundary}.
The main focus of this work is to address these issues and to estimate the difference between the computed value of $\prq u_A$ and the sought $u_A$, which is assumed beyond our reach being too costly to compute.

We note in passing that the processes $\efvec$ and $\pport$ 
live in different probability spaces. Likewise, the stopping times corresponding to the full-dimensional and projected SDE are adapted to $\mathcal F_t$ and $\prq{\mathcal F_t}$, respectively.

\subsection{Implied stopping time and price bounds}
\label{subsec:bounds}

Above we have laid out the question of the feasibility of using the projected dynamics $\pport $ in pricing American options, we now show below in Section \ref{subsubsec:lbound} how the solution of the projected problem $\prq u_A$ gives rise to an exercise strategy that is sub-optimal.
This sub-optimal exercise strategy gives a lower bound for the option price.
We complement this lower bound with a corresponding upper bound in Section \ref{subsubsec:ubound}.

\subsubsection{Lower bound}
\label{subsubsec:lbound}

In the full American option pricing problem \eqref{eq:americanPriceDefinition}, the optimal stopping time, $\tau^* \in \mathcal T_0$, such that
\begin{align*}
u_A \parent{0,\vvec x_0} = \expp{\discountzero{\tau^*} g \parent{\poneft {\tau^*}} | \eft 0 = \vvec x_0},
\end{align*}
is given by 
\begin{align}
\tau^* = \mathrm{inf} \sset{t \in [0,T]: u_A \parent{t, \eft t} = g \parent{\poneft t}}.
\label{eq:optimalStoppingTime}
\end{align}
Any stopping time $\tau \in \mathcal T_0$ gives a lower bound for the option price.
We do not have access to the full cost-to-go function, $u_A$, and hence a natural replacement is given by the projected cost-to-go function $\prq u_A$. Indeed, the projected cost-to-go function $\prq u_A$ gives rise to two hitting times:
\begin{align*}
\prq{\tau}^* \equiv \mathrm{inf} \sset{t \in [0,T]: \prq u_A \parent{t,\pportt t} = g \parent{\pportt t} },
\end{align*}
where the dynamics of $\prq S$ is given by \eqref{eq:surrogate} and
\begin{align}
\label{eq:hittingTime}
\stoptimeb \equiv \mathrm{inf} \sset{t \in [0,T]: \prq u_A \parent{t,\poneft t} = g \parent{\poneft t}}.
\end{align}
We note that due to the terminal condition on $\prq u_A$ in  \eqref{eq:projectedAmericanBackwardEquation} all hitting times are bounded by $T$. 

We conclude the discussion on the lower bound of the option value by stating the lower bound implied by the hitting time $\stoptimeb \in \mathcal T_0$,
\begin{align}
\label{eq:lowerBoundEquation}
u_A \parent{0,\vvec x_0} \geq \expp{ \discountzero{\stoptimeb} g \parent{\poneft {\stoptimeb}} | \eft 0 = \vvec x_0}.
\end{align}
We emphasize that we have not made a comparison between $u_A \parent{0,\vvec x_0}$ and $\prq u_A \parent{0, \pone \vvec x_0}$.

\begin{remark}[On least-squares Monte Carlo]
The approach we have adopted shares some similarities with the least-squares Monte Carlo approach. However, there are key differences:
In the least-squares Monte Carlo method, the stopping time can be understood as a hitting time into a region where the holding value of the option, as estimated through regression to a set of basis functions, is exceeded by the early exercise price.
The hitting time \eqref{eq:hittingTime} is likewise defined as a comparison between the estimated cost-to-go function, $\prq u_A$, and the early exercise price.
However, the estimated cost-to-go function, $\prq u_A$, does not depend on a choice of basis functions, only on the direction of the projection.
On the other hand, $\prq u_A$ is constructed using the Markovian projection $\pport$ instead of the true forward model $\efvec$.
\end{remark}

\subsubsection{Upper bound}
\label{subsubsec:ubound}

To assess the accuracy of approximating the process with a low-dimensional Markovian projection, we want to devise a corresponding upper bound. For this, we use the dual representation due to \citet{rogers2002monte}.

The dual representation of the pricing problem is as follows.
The price of the American option is given by:
\begin{align}
u_A \parent{0,\vvec x_0} = \mathop{\mathrm{inf}}_{\gmartingale \in H_0^1} \expp{\sup_{0 \leq t \leq T} \parent{\tilde Z \parent t - \gmartingale \parent t}| \eft 0 = \vvec x_0},
\label{eq:dual}
\end{align}
where $H_0^1$ denotes the space of all integrable martingales $\gmartingale$, $t \in [0,T]$
such that for $\gmartingale \in H_0^1$
\begin{equation*}
\begin{split}
\mathoper{sup}_{0 \leq t \leq T} \absval{\gmartingale \parent t} & \in L^1,
\\
\gmartingale  \parent 0 &= 0.
\end{split}
\end{equation*}
Here $\tilde Z \parent t$ denotes the discounted payoff process
\begin{align}
\tilde Z \parent t = 
\discountzero t
g \parent{\eft t},
~~ t \in [0,T].
\label{eq:discountedPayoff}
\end{align}

Naturally, evaluating the statement within the infimum of Equation \eqref{eq:dual} with any martingale, $\gmartingale \parent t \in H_0^1$, will give an upper bound to the option price.
A martingale, $\gmartingale ^* \parent t$, reaching the infimum \eqref{eq:dual} is called an optimizing martingale.
In general, finding an optimizing martingale is as complex as finding the solution to the pricing problem.
In fact, when the cost-to-go function, $u_A \parent{t,\vvec x}$, is known, the optimizing martingale can be written out following the approach in \citet{haugh2004pricing}:
\begin{align}
\begin{split}
\ddif \gmartingale ^* \parent t &= 
\discountzero t
\parent{\transpose{\parent{\nabla u_A}} \mmat b} \parent{t,\eft t} \ddif \vvec W\parent t, ~~~ t \in [0,T],
 \\
\gmartingale ^* \parent 0 &= 0 .
\end{split}
\label{eq:optimalMartingale}
\end{align}
We construct a near-optimal martingale $\gmartingale^\star \in H_0^1$ by replacing in \eqref{eq:optimalMartingale} the exact $u_A$ with the approximate cost-to-go function, $\prq u_A.$ 
This yields the explicit upper bound
\begin{align}
u_A \parent{0,\vvec x_0} \le \expp{\sup_{0 \leq t \leq T} \parent{\tilde Z \parent \atime -\gmartingale^\star \parent t}| \eft 0 = \vvec x_0},
\label{eq:dualapprox}
\end{align}
where
\begin{align}
\begin{split}
\ddif \gmartingale ^\star \parent t &= 
\discountzero t
\parent{ \transpose{\parent{\nabla \prq u_A}} \parent{t, \poneft t}} \pone \mmat b \parent{t, \eft t} \ddif \vvec W \parent t, ~~~ t \in [0,T],
 \\
\gmartingale ^\star \parent 0 &= 0.
\end{split}
\label{eq:subOptimalMartingale}
\end{align}
In other words, we evaluate the sensitivity, or delta, of the projected, approximate value function using the projected, non Markovian, version of the true stochastic process. We also note that the sensitivity of the projected, approximate value function can be used as an approximate sensitivity of the option value with regard to the value of the underlying portfolio. \citep[chapter 3]{rogers2002monte}

\subsection{Dimension reduction for models relevant to quantitative finance}
\label{subsec:DimRed}

We have established a lower as well as an upper bound for the American basket option prices using Markovian projection.
The question of which models feature tight bounds is naturally of interest for the applicability of our methodology.
Thus, this section focuses on the domain of applicability of the Markovian projection.
Below, we demonstrate that the procedure of Markovian projection produces exact results for the Bachelier Model.
This is a consequence of the Gaussian returns in the model.
In fact, it turns out that due to the constant volatility \eqref{eq:BachelierDefinition} of the Bachelier model, the coefficients of the relevant low-dimensional PDEs can be evaluated without Laplace approximation.

Following our discussion about the Bachelier model, we then concentrate on the Black-Scholes model, which is known to produce option prices that are well approximated by the Bachelier model. Finally, we state conditions under which the Black-Scholes model also satisfies the property that the value function of the option depends only on a single state variable $s$, namely the portfolio value $\pone \vvec x$ .

\subsubsection{Definitions}

First, let us define some terminology.
Let $1\le n <d$ and $D \subset \mathbb R^d$ be a convex set with piecewise smooth boundary.

\begin{definition} 
We call a function $v: D \rightarrow \mathbb R $
\emph{essentially $n$-dimensional} if there exist a function $\zeta: \mathbb R^n \to \mathbb R$ and a matrix $\mmat N \in \mathbb R^{ n \times d}$ with orthogonal rows such that 
$v: \mathbb R^d \rightarrow \mathbb R$ is given by
\begin{align*}
v \parent{\vvec x} &= \zeta \parent{\mmat N \vvec x}. 
\end{align*}
\end{definition}

\begin{definition}
By extension, we call a differential operator $\mathcal K$ 
essentially $n$-dimensional if the following backward PDE is well posed
\begin{align}
\label{eq:opessndim}
\begin{split}
- \upartial_t w \parent{t,\vvec x} &= \mathcal K w \parent{t,\vvec x},~~ \parent{t,\vvec x} \in [0,T) \times D,
\\
w \parent{T,\cdot } 
&=
w_T \parent{\cdot}
,
\end{split}
\end{align}
and it has an unique essentially $n$-dimensional solution for any essentially $n$-dimensional terminal value, $w_T$.
Here we specifically mean that the function $\zeta$ may depend on time, that is
\begin{align*}
w \parent{t,\vvec x} &= \zeta \parent{t,\mmat N \vvec x}, 
\end{align*}
but the matrix $\mmat N$ does not.
\end{definition}

\begin{remark}[Time independence of lower dimensional subspaces] The definition above rules out solutions to \eqref{eq:opessndim} that are essentially lower-dimensional in each instant of time although the directions along which such functions have non-vanishing partial derivatives change over time. We also tacitly assume in this definition that the allowed terminal values make the problem \eqref{eq:opessndim} well posed. 
We later exploit this structure when proving Lemma \ref{lem:european} that allows us to reduce the analysis of essentially low-dimensional models to the study of European value functions only, disregarding the possibility for early exercise.
\end{remark}

\subsubsection{Bachelier model}

First, we prove that the Markovian projection gives exact results even for American options pricing when used on the Bachelier model.
This arises from the fact that the Markovian-projected basket $\pport$ coincides in law with the true basket $\pone \vvec X $.
After discussing the one-dimensional nature of the Bachelier model, we propose possible extensions introducing a stochastic clock.

\begin{lemma}[Dimension reduction in the Bachelier model]
\label{lem:bachelierReduction}
Let $\eft t$ solve \eqref{eq:firstFullSDE} and the drift and volatility be given by \eqref{eq:riskNeutralDrift} and \eqref{eq:BachelierDefinition} respectively.
Furthermore, let $\pportt t$ be the Markovian projection defined by eqs. \eqref{eq:surrogate}, \eqref{eq:baradef} and \eqref{eq:barbdef} for $\poneft t$. Then $\pportt t$ and $\poneft t$ coincide in law.
\end{lemma}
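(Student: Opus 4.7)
The plan is to compute both sides explicitly: I will derive an autonomous SDE for the true portfolio process $\portft{t} = \poneft{t}$ by applying $\pone$ to \eqref{eq:firstFullSDE}, compute the coefficients of the Markovian projection SDE \eqref{eq:surrogate} in closed form using \eqref{eq:baradef} and \eqref{eq:barbdef}, and observe that the two resulting one-dimensional SDEs are identical linear SDEs with the same initial condition. Equality in law then follows from uniqueness in law for linear SDEs with deterministic coefficients.

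First, I would note that the linearity of the risk-neutral drift \eqref{eq:riskNeutralDrift} together with the constant volatility \eqref{eq:BachelierDefinition} makes the projected problem trivial. Applying $\pone$ to \eqref{eq:firstFullSDE} gives
\begin{align*}
\ddif \portft{t} = r\, \portft{t}\, \ddif t + \pone \mmat{\Sigma}\, \ddif \vvec W(t), \qquad \portft{0} = \pone \vvec x_0,
\end{align*}
which is itself a well-defined one-dimensional SDE in $S$ (no hidden dependence on $\efvec$). Since $\pone \mmat{\Sigma} \in \mathbb R^{1\times k}$ is deterministic, the stochastic integral $\pone \mmat{\Sigma}\, \ddif \vvec W(t)$ can be represented as $\sigma_S\, \ddif B(t)$ with $\sigma_S^2 = \pone \mmat{\Sigma} \transpose{\mmat{\Sigma}} \transpose{\pone}$ and $B$ a scalar Brownian motion.

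Next I would evaluate the Markovian projection coefficients. For the drift, linearity gives
\begin{align*}
\pdrift(t,s) = \expp{\pone (r\, \eft{t}) \,\vert\, \poneft{t}=s,\, \eft{0}=\vvec x_0} = r s,
\end{align*}
while for the diffusion, the matrix $\pone \mmat{\Sigma}\transpose{\mmat{\Sigma}}\transpose{\pone}$ is a deterministic constant, so
\begin{align*}
\parent{\pvola}^2(t,s) = \expp{\pone \mmat{\Sigma}\transpose{\mmat{\Sigma}}\transpose{\pone} \,\vert\, \poneft{t}=s,\, \eft{0}=\vvec x_0} = \sigma_S^2.
\end{align*}
Therefore \eqref{eq:surrogate} reads $\ddif \pportt{t} = r\, \pportt{t}\, \ddif t + \sigma_S\, \ddif W(t)$ with $\pportt{0} = \pone \vvec x_0$, which is the same linear Ornstein-Uhlenbeck-type SDE as the one satisfied by $\portft{t}$.

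Finally, since this one-dimensional linear SDE has deterministic, Lipschitz coefficients, weak (and strong) uniqueness holds, so any two solutions with the same initial value coincide in law; in fact both $\portft{\cdot}$ and $\pportt{\cdot}$ are Gaussian processes with identical means $e^{rt}\pone \vvec x_0$ and identical covariance kernels, which already determines their joint law. I do not expect any real obstacle here: the only subtle point is that the two processes live on different probability spaces (as emphasised at the end of Section \ref{subsec:markovianbasics}), so the conclusion must be phrased as equality in law rather than pathwise equality, but this is exactly what the lemma claims.
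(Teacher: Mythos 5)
Your proposal is correct and is exactly the direct computation the paper has in mind (its proof is simply stated as ``direct''): applying $\pone$ to the Bachelier SDE yields a closed linear one-dimensional SDE whose coefficients coincide with the projected drift $rs$ and the constant projected volatility $\pone\mmat{\Sigma}\transpose{\mmat{\Sigma}}\transpose{\pone}$, and uniqueness in law for this linear SDE gives the claim. Your remarks on representing $\pone\mmat{\Sigma}\,\ddif\vvec W$ as a scalar Brownian term and on equality in law versus pathwise equality are both appropriate.
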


\begin{samepage}
\begin{proof}
The proof is direct.
\end{proof}
\end{samepage}

We have established that the multivariate Bachelier model has an essentially one-dimensional generator.

However, we know that the model does not feature fat-tailed distribution for returns or clustering of volatility.
Both features have been observed in the markets (see \citet{fama1965behavior} \citet{melino1991pricing}, \citet{mandelbrot1997variation} and \citet{cont2001empirical}).
In the following Corollary, we address these issues through the introduction of a stochastic clock.
In this way, we introduce a larger class of arbitrage-free dynamics for which the price distribution conditioned to the value of the stochastic clock reduces to the one from the Bachelier model.

\begin{samepage}
\begin{corollary}[Stochastic time change in the Bachelier model]
\label{cor:stochclock}
Let $\efvec$ be given by the Bachelier model (\ref{eq:riskNeutralDrift}, \ref{eq:BachelierDefinition}) and let $U \parent t $ be an almost surely increasing process, or a stochastic clock, independent of $\efvec$ in $\mathbb R^+$, with $U \parent 0 = 0$.
Let the discounted process corresponding to $\efvec$ be
\begin{align}\label{eq:RX}
\vvec \gmartingale _{\efvec } \parent t = \discountzero t \efvec \parent t,
\end{align}
then a related stock price process $\vvec Y$, given by
\begin{align*}
\vvec Y \parent t = \discountzeroreverse t \discountzero{U \parent t} \efvec \parent{U \parent t} = \discountzeroreverse t \vvec \gmartingale_{\efvec } \parent{U \parent t}
\end{align*}
has an essentially one-dimensional generator.
\end{corollary}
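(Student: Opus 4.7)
The plan is to reduce the corollary to Lemma \ref{lem:bachelierReduction} by conditioning on the $\sigma$-algebra $\mathcal F^U_\infty = \sigma(U(s) : s \geq 0)$ generated by the independent clock, then averaging out. The guiding intuition is that, conditional on the path of $U$, the process $\vvec Y$ inherits the Gaussian structure of the discounted Bachelier basket $\vvec \gmartingale_{\efvec}$, which is precisely what allowed Lemma \ref{lem:bachelierReduction} to collapse the basket to one dimension; the independence of $U$ from $\efvec$ then lifts the property from the conditional law to the unconditional one.

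First, I would observe that under the dynamics \eqref{eq:riskNeutralDrift}--\eqref{eq:BachelierDefinition} the discounted process admits the Gaussian-martingale representation
\begin{align*}
\vvec \gmartingale_{\efvec}(u) = \vvec x_0 + \int_0^u e^{-rv}\mmat{\Sigma}\,\mathrm d \vvec W(v),
\end{align*}
so that, using the definition of $\vvec Y$ in the corollary,
\begin{align*}
\vvec Y(T) - e^{r(T-t)} \vvec Y(t) = e^{rT}\!\int_{U(t)}^{U(T)} e^{-rv} \mmat{\Sigma}\,\mathrm d \vvec W(v).
\end{align*}
Conditional on $\mathcal F^U_\infty$ and on $\vvec Y(t) = \vvec y$, this increment is Gaussian with mean zero, $\mathcal F^U_\infty$-measurable covariance $c_t^T\, \mmat{\Sigma}\transpose{\mmat{\Sigma}}$ where $c_t^T = e^{2rT}\int_{U(t)}^{U(T)} e^{-2rv}\,\mathrm d v$, and independent of $\vvec Y(t)$.

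Second, I would pick an arbitrary essentially one-dimensional terminal payoff $w_T(\vvec y) = \zeta_T(\pone \vvec y)$ and express the European value $w(t,\vvec y) = \mathbb E[w_T(\vvec Y(T)) \mid \vvec Y(t) = \vvec y]$ via the tower property, first conditioning on $\mathcal F^U_\infty$. The inner conditional expectation is a one-dimensional Gaussian integral with mean $e^{r(T-t)}\pone \vvec y$ and variance $c_t^T \pone \mmat{\Sigma}\transpose{\mmat{\Sigma}}\transpose{\pone}$, and hence depends on $\vvec y$ only through $\pone \vvec y$. Since $U$ is independent of $\efvec$, averaging over $\mathcal F^U_\infty$ does not reintroduce any further dependence on the remaining components of $\vvec y$. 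I would then apply Lemma \ref{lem:european} to lift this essentially one-dimensional structure from European value functions to the generator itself.

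The main obstacle I expect is not the Gaussian bookkeeping but the fact that, for a general stochastic clock $U$, the process $\vvec Y$ is not itself Markovian, so the phrase ``generator of $\vvec Y$'' requires care. The cleanest remedy I see is to augment the state to $(\vvec Y, U)$, or, when $U$ is a subordinator, to pair the Bachelier spatial generator with the L\'evy generator of $U$, and to verify that the spatial dependence of the associated backward PDE acts on $\vvec y$ only through $\pone \vvec y$. Once this identification is in place, the argument above combined with Feynman--Kac yields essential one-dimensionality in the sense of the definition preceding Lemma \ref{lem:bachelierReduction}.
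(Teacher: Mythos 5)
Your proposal is correct and follows essentially the same route as the paper's proof: condition on the independent clock via the tower property, observe that the resulting inner expectation is a Bachelier-type Gaussian functional depending on $\vvec y$ only through $\pone \vvec y$ (the paper cites Lemma \ref{lem:bachelierReduction} where you compute the Gaussian increment explicitly), and then average out over $U$. Your added caution about $\vvec Y$ not being Markovian and the suggestion to augment the state with $U$ addresses a point the paper's proof passes over silently, but it does not change the substance of the argument.
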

\end{samepage}

\begin{samepage}
\begin{proof} The proof is divided into two steps.

\paragraph{Step 1}

The combination of \eqref{eq:RX} and \eqref{eq:firstFullSDE} yields that $\vvec \gmartingale _{\efvec }$ is a martingale with respect to its canonical filtration.
We show that the same holds for $\vvec \gmartingale_{\vvec Y} \parent t = \exp(-rt)\vvec Y \parent t = \vvec \gmartingale _{\efvec } (U(t))$.

We take $0\le s < t \le T$ and consider the conditional expectation
\begin{align*}
\expp{\vvec \gmartingale_{\vvec Y} \parent t | \vvec \gmartingale_{\vvec Y} \parent s } 
= 
&\expp{\vvec \gmartingale_{\efvec} \parent {U(t)} | \vvec \gmartingale_{\efvec} \parent { U(s)} } \\
=& 
\expp{\expp{\vvec \gmartingale_{\efvec} \parent {U(t)} | \vvec \gmartingale_{\efvec} \parent { U(s)}, U(t),U(s)} | \vvec \gmartingale_{\efvec} \parent { U(s)} } \\
=& \expp{\vvec \gmartingale _{\efvec} \parent{U \parent s} | \vvec \gmartingale _{\efvec} \parent {U \parent s}}\\
=& \vvec \gmartingale_{\vvec Y} \parent {s}. 
\end{align*}

\paragraph{Step 2} Verify the claim of essentially one-dimensionality.


Our goal now is to represent the European option price on the basket $\pone \vvec Y(T) $, $w$, in terms of
a weighted average of European options, each of them written on the basket $\pone \vvec X.$

We have, recalling that $\vvec Y(t) = \exp(rt) \exp(-rU(t)) \vvec X(U(t))$,

\begin{align}
\label{eq:Yoption}
\begin{split}
w \parent{t, \vvec y} =& \exp(-r(T-t))\expp{g \parent{\pone \vvec Y \parent T } | \vvec Y( t) = \vvec y }
\\
=& \exp(-r(T-t))\expp{\expp{g \parent{\pone \vvec Y \parent T } |U(T), U(t), \vvec Y( t) }| \vvec Y( t) = \vvec y }
\\
=& \exp(-r(T-t))\expp{{\Pi} | \vvec Y( t) = \vvec y }
\end{split}
\end{align}
%
with
$$
{\Pi} = \expp{g \parent{\exp(-r(U(T) -T))\pone \vvec X(U(T) ) } |U(T), U(t), \vvec X( U(t)) 
}
$$
being the price of a European option written on the basket $\pone \vvec X$ with maturity time $U(T)$ and time to maturity $U(T) -U(t) .$
Then, due to Lemma \ref{lem:bachelierReduction},
$ {\Pi}$ is essentially one dimensional and depends only on the basket value $$\pone \vvec X(U(t)) = \exp(-r(t-U(t))) \pone \vvec y,$$ namely
\begin{equation}\label{eq:mathH}
{\Pi} = h(\pone \vvec y,U(t)-t,U(T)-T).
\end{equation}
The combination of \eqref{eq:Yoption} and \eqref{eq:mathH} thus implies that 
$$w \parent{t, \vvec y} = \exp(-r(T-t))\expp{h(\pone \vvec y,U(t)-t,U(T)-T) } ,$$
meaning that $w$ only depends on $\pone \vvec y$, which is what we wanted to prove.
\end{proof}
\end{samepage}

\begin{remark}[On the generality of the Stochastic Clock]
We note that in proving Corollary \ref{cor:stochclock}, we allow the stochastic clock $U$ to be quite general.

However, we note that for stochastic clocks with discontinuous trajectories, the dynamics of $\vvec Y$ becomes discontinuous and thus the Gy\"ongy lemma no longer holds.
An example of $U$ with continuous trajectories is simply 
\begin{align*}
\ddif U \parent{t} =& \parent{c + V^2 \parent{t} } \ddif t , \\
U(0) =& 0
\end{align*}both
where $c>0$ and $V$ is a one-dimensional Ornstein-Uhlenbeck process.
\end{remark}

\begin{remark}[On the density of Bachelier model augmented by stochastic clock]
In the preceding discussion above, we have assumed the density of the forward process to be known.
For most choices of the stochastic clock process, this assumption will be violated.
However, we still have access to the density conditioned on the value of the stochastic clock process.
As a result, one may still evaluate the value of the projected volatility, introducing one additional quadrature and integrating over the possible values of the stochastic process.
\end{remark}

\begin{remark}[Stochastic interest rates]
\label{rem:stochIntRates}
For time dependent, stochastic interest rates independent of the price process, one may adopt essentially the same procedure as for the stochastic clock in Corollary \ref{cor:stochclock}, essentially averaging over possible values for the independent interest rate process.
\end{remark}

For other models, such as the Black-Scholes model, there is no guarantee that Markovian projection method for pricing American basket options is exact.
However, the similarity of the Black-Scholes and Bachelier models has been pointed out in the simpler European setting in earlier works by Teichmann and others. \citep{schachermayer2008close, grunspan2011note,thomson2016option}

\subsubsection{Other models in reduced dimension}

We have demonstrated that the value function of an American basket option depends only on time and one state variable in the Bachelier model.
Here, we present some particular cases in which this property holds for a more general stochastic model.
We first show that the reducibility in dimension is a phenomenon, that arises purely from the dynamics of the system, not the early exercise property of the option.

Using this result, we characterize certain parametrizations of the Black-Scholes model that reproduce the reduced dimension behavior familiar from the Bachelier model discussed in the preceding section.

\begin{lemma}[Decoupling of dimension reduction and early exercise]
\label{lem:european}
If a $d$-dimensional SDE has a generator $\mathcal L$ that is essentially one dimensional,
then the corresponding backward operator, $\mathcal H$, for the American value function,
\begin{align*}
\parent{\mathcal H v} \parent{t, \vvec x} =&
\parent{\mathcal L v } \parent{t, \vvec x} 
 \mathbf 1_{\mathoper{max} \parent{ \parent{\mathcal L v}\parent{t,\vvec x}, ~ v \parent{t, \vvec x} - g \parent{\vvec x}} > 0}
& \parent{t,\vvec x} \in [0,T] \times D
\end{align*}
is essentially one dimensional.
\end{lemma}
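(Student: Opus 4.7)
The plan is to verify the definition of essentially one-dimensional operator directly for $\mathcal{H}$: for an arbitrary essentially one-dimensional terminal value, the backward problem $-\upartial_t w = \mathcal{H} w$, $w(T,\cdot)=w_T$, must admit a unique essentially one-dimensional solution. This is a symbolic argument at the level of operators, not a limiting argument at the level of value functions.

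First I fix the reduction matrix $\mmat N \in \R^{1\times d}$ witnessing the essential one-dimensionality of $\mathcal L$. In the setting where $\mathcal H$ arises from an American basket option, the obstacle $g$ is itself essentially one-dimensional along this very direction, $g(\vvec x)=\gamma(\mmat N \vvec x)$, since the payoff in \eqref{eq:fullAmericanEquation} depends on $\vvec x$ only through the basket value. The hypothesis on $\mathcal L$ supplies a reduced one-dimensional operator $\tilde{\mathcal L}$ such that, for every essentially one-dimensional $v(t,\vvec x)=\zeta(t,\mmat N\vvec x)$, one has $(\mathcal L v)(t,\vvec x)=(\tilde{\mathcal L}\zeta)(t,\mmat N\vvec x)$; this is essentially the content of the definition of essential one-dimensionality applied to $\mathcal L$.

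Next, for any essentially one-dimensional $v$ as above, I simply plug into the definition of $\mathcal H$:
\begin{align*}
(\mathcal H v)(t,\vvec x)
&=(\mathcal L v)(t,\vvec x)\,\mathbf 1_{\max((\mathcal L v)(t,\vvec x),\,v(t,\vvec x)-g(\vvec x))>0}\\
&=(\tilde{\mathcal L}\zeta)(t,\mmat N\vvec x)\,\mathbf 1_{\max((\tilde{\mathcal L}\zeta)(t,\mmat N\vvec x),\,\zeta(t,\mmat N\vvec x)-\gamma(\mmat N\vvec x))>0}.
\end{align*}
Every factor on the right is a function of $(t,\mmat N\vvec x)$ only, so $(\mathcal H v)$ is essentially one-dimensional. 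Hence $\mathcal H$ maps the class of essentially one-dimensional functions to itself, and it reduces to a one-dimensional operator $\tilde{\mathcal H}$ defined by $(\tilde{\mathcal H}\zeta)(t,s)=(\tilde{\mathcal L}\zeta)(t,s)\mathbf 1_{\max((\tilde{\mathcal L}\zeta)(t,s),\,\zeta(t,s)-\gamma(s))>0}$.

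Given any essentially one-dimensional terminal value $w_T(\vvec x)=\zeta_T(\mmat N\vvec x)$, the reduced obstacle problem $-\upartial_t \zeta=\tilde{\mathcal H}\zeta$ on $[0,T)$ with $\zeta(T,\cdot)=\zeta_T$ is a one-dimensional parabolic obstacle problem for the well-posed operator $\tilde{\mathcal L}$; call its unique solution $\zeta$. Then $w(t,\vvec x):=\zeta(t,\mmat N\vvec x)$ is essentially one-dimensional and satisfies $-\upartial_t w=\mathcal H w$ with $w(T,\cdot)=w_T$ by the computation above. Conversely, any essentially one-dimensional solution of the full problem projects to a solution of the reduced problem, so uniqueness in the reduced setting gives uniqueness here. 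This verifies the definition of essentially one-dimensional for $\mathcal H$.

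The only delicate point is the well-posedness of the reduced obstacle problem, but this is already absorbed into the hypothesis: the assumption that $\mathcal L$ is essentially one-dimensional tacitly provides a genuine parabolic $\tilde{\mathcal L}$, and standard theory of variational inequalities for such operators then gives existence and uniqueness for $\tilde{\mathcal H}$ with the extended Dirichlet data specified in Remark \ref{rem:boundary}. No new analytic machinery is needed; the lemma is essentially a bookkeeping statement that the obstacle construction preserves the dimension-reduction structure of the underlying linear generator.
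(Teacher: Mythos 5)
Your argument is correct in substance but takes a genuinely different route from the paper. You work at the level of the operator: you show that $\mathcal H$ maps essentially one-dimensional functions to essentially one-dimensional functions (the indicator and the obstacle both depend on $\vvec x$ only through $\mmat N \vvec x$), reduce the backward problem to a one-dimensional obstacle problem for $\tilde{\mathcal H}$, and invoke well-posedness of that variational inequality to construct and lift the solution. The paper instead discretizes the exercise opportunities: it introduces Bermudan value functions $v_N$, observes that on each interval $(t_{i},t_{i+1})$ the linear backward PDE for $\mathcal L$ propagates essential one-dimensionality (this is exactly what the hypothesis provides), that the pointwise maximum with the payoff at each monitoring date trivially preserves it, and then passes to the limit $N\to\infty$ to recover the American value function. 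The trade-off is this: your approach is cleaner and exposes the structural fact that the obstacle construction commutes with the dimension reduction, but it leans on two things you only gesture at. First, the paper's definition of an essentially one-dimensional operator is phrased in terms of \emph{solutions} of the linear backward PDE, not in terms of $\mathcal L$ preserving a function class; extracting your reduced operator $\tilde{\mathcal L}$ with $(\mathcal L v)(t,\vvec x)=(\tilde{\mathcal L}\zeta)(t,\mmat N\vvec x)$ requires an extra observation (e.g.\ evaluating $-\upartial_t w = \mathcal L w$ at the terminal time for one-dimensional terminal data), which you should state rather than call ``essentially the content of the definition.'' Second, your existence and uniqueness for the reduced obstacle problem is an appeal to variational-inequality theory that requires $\tilde{\mathcal L}$ to be genuinely parabolic, which is more than the bare well-posedness of the linear problem assumed in the definition; the paper's Bermudan-limit argument avoids ever solving an obstacle problem directly and only uses the hypothesis on $\mathcal L$ interval by interval, at the cost of a limiting step $v = \lim_N v_N$ that it cites rather than proves. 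Neither gap is fatal in the context of this paper's standing regularity assumptions, but you should flag both explicitly.
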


\begin{proof}
First, define a coordinate rotation, $\mmat Q$, $
\transpose{\mmat Q} \mmat Q = \mathbf 1$, such that the portfolio value is given by the first coordinate in the transformed coordinates $\vvec y = \mmat Q \vvec x$, with $\mmat Q$ chosen so that the first row of $\mmat Q $ 
and $\transpose{\pone}$ are collinear. In these coordinates, denote the Black-Scholes equation
for the European value function as
\begin{align}
\begin{split}
- \upartial_t u \parent{t,\vvec y} &= \mathcal L_y \parent{t,\vvec y} u \parent{t,\vvec y}, ~~~ 
\parent{t,\transpose{\mmat Q} \vvec y} \in [0,T]\times D,
 \\
u \parent{T,\vvec y} &= g \parent{\vvecc y 1}, 
~~~
\transpose{\mmat Q} \vvec y \in D.
\end{split}
\end{align}
To continue the proof, let us consider a Bermudan value function, $v_N$, with discrete equispaced monitoring times,
$t_j = \frac{jT}{N}$, $ 0\leq j \leq N$, which solves \citep{barraquand1995numerical}
\begin{align}
\begin{split}
- \upartial_t v_N \parent{t,\vvec y} &= \mathcal L_y \parent{t,\vvec y} v_N \parent{t,\vvec y}, ~~~ 
\parent{t,{\transpose{\mmat Q} \vvec y}} \in \parent{t_i,t_{i+1}} \times D, ~~ 0 \leq i \leq N,
 \\
v_N \parent{T,\vvec y} &= g \parent{\vvecc y 1}, ~~~ \transpose{\mmat Q} \vvec y \in D,
 \\
v_N \parent{t_i,\vvec y} &= \mathrm{max} \parent{v_N \parent{t_i^+,\vvec y},g \parent{\vvecc y 1}}, ~~~ 0 \leq i \leq N, ~~ \transpose{\mmat Q} \vvec y \in D.
\end{split}
\end{align}
The terminal value $g \parent{\vvecc y 1}$ is essentially one dimensional, and by the assumption on $\mathcal L$, we know that $v_N \parent{t,\vvec y}$ is essentially one dimensional for $t \in \parent{t_{N-1},t_N}$.
Thus, the function $v_N \parent{t_{N-1},\vvec y}$ is the maximum of two essentially one-dimensional functions that depend only on the $\vvecc y 1$ coordinate.
Therefore, we can conclude that
\begin{align}
\upartial_{\vvecc y j} v_N \parent{t_{N-1},\vvec y} =&0, ~~ &j >1,
~~ {\transpose{\mmat Q} \vvec y} \in D
\end{align}
and, by using the same argument for all the subsequent intervals $\parent{t_{i-1},t_i}$, we have that
\begin{align}\label{eq:zero_der}
\upartial_{\vvecc y j} v_N \parent{t,\vvec y} &= 0 , &j >1,~~ {\transpose{\mmat Q} \vvec y} \in D, ~~~ \forall t \in [0,T].
\end{align}
The American option value function, $v$, solves
\begin{align*}
- \upartial_t v \parent{t,\vvec y} &= \mathcal H_y \parent{t,\vvec y} v \parent{t,\vvec y}, ~~~ 
&\parent{t,{\transpose{\mmat Q} \vvec y}} \in [0,T]\times D,
\nonumber \\
v \parent{T, \vvec y} &= g \parent{\vvecc y 1}, ~~ &{\transpose{\mmat Q} \vvec y} \in D,
\end{align*}
where $\mathcal H_{\vvec y}$ is the $\vvec y$-coordinate representation of the operator defined in \eqref{eq:hamiltonian}.
$v$ is given as the limit of Bermudan value functions as the number of exercising times, $N$, tends to infinity:
\begin{align}\label{eq:berlimit}
v \parent {t,\vvec y} &= \mathop{\mathrm{lim}}_{N \rightarrow \infty} v_N \parent{t,\vvec y}, ~~ &\parent{t,\transpose{\mmat Q} \vvec y} \in [0,T] \times D.
\end{align}
The combination of \eqref{eq:zero_der} and \eqref{eq:berlimit} yields
\begin{align*}
\upartial_{\vvecc y j} v \parent{t,\vvec y} &= 0 , ~~ &j >1, ~ \transpose{\mmat Q} \vvec y \in D, ~~~ \forall t \in [0,T],
\end{align*}
which concludes the proof.
\end{proof}

We have already seen that the Bachelier model is one example, in which the Hamiltonian operator, $\mathcal H$, is essentially one-dimensional.
Next, we proceed to other examples of stochastic models where the generator $\mathcal L$ is essentially one-dimensional, guaranteeing dimension reduction in the American option value function.

\subsubsection{Black-Scholes model}

Next, we turn our focus to the Black-Scholes model itself and examine how it behaves under Markovian projection and whether there exist parametrizations of the model that are essentially one dimensional.

First, let us state the relevant Black-Scholes PDE \eqref{eq:europeanFullPde} corresponding to the Black-Scholes model:
\begin{align}
\begin{split}
- \upartial_t w \parent{t,\vvec x} &=
\underbrace{
 -r w \parent{t,\vvec x}
 + r \ssum{i}{} \vvecc x i \upartial_{\vvecc x i} w \parent{t,\vvec x}
 + \ssum{ij}{} \mmatc \Omega i j \vvecc x i \vvecc x j \upartial^2_{\vvecc x i \vvecc x j} w \parent{t,\vvec x}
 }_{\equiv \parent{\mathcal{L}_{BS} w } \parent{t,\vvec x}}
 , ~~~ t \in [0,T], ~ \vvec x \in D^d_{BS},
 \\
 w \parent{T,\cdot} &= g \parent{\pone \cdot},
 \end{split}
 \label{eq:BSDef}
\end{align}
where the symmetric matrix, $\mmat \Omega \in \mathbb R^{d \times d}$, is understood as the quadratic form corresponding to a volatility matrix, $\mmat \Sigma \in \mathbb R^{d \times k}$, of Equation \eqref{eq:BlackScholesDefinition}, $\mmat \Omega = \frac{\mmat \Sigma \transpose{\mmat \Sigma} }{2}$. The domain is given as $D = D^d_{BS} = \mathbb R^d_+$

\begin{remark}
\label{rem:vanishingWeights}
A trivial example of a parametrization of the Black-Scholes model for which the value function is essentially one-dimensional is the case when portfolio weights vanish except for one, $\pone = [1,0,0,\dots,0,0]$.
For such a portfolio, we can write a one-dimensional PDE describing the cost-to-go function.
\end{remark} 

For an arbitrary set of portfolio weights, $\pone$, of the Black-Scholes model Remark \ref{rem:vanishingWeights} certainly does not apply.
However, we may apply a coordinate transformation to transform the portfolio weights to the particular choice in Remark \ref{rem:vanishingWeights}.
If the resulting transformed PDE is of the form \eqref{eq:BSDef}, this is sufficient to show that the value function is essentially one-dimensional.

Below, we demonstrate this and give a particular class of parametrizations, for which the transformation is possible.
For other parametrizations, we note that these parametrizations can be approximated by ones where portfolio returns are log-normal.
For a discussion of approximating the linear combination of variables from a multivariate log-normal, we refer the reader to \citet{mehta2007approximating}.

We rotate the coordinates of the Black-Scholes equation \eqref{eq:BSDef} using the coordinate transformation, $\mmat Q$, from the proof of Lemma \ref{lem:european}.
We have
\begin{align*}
\mathcal{L}_{BS,\vvec y} u \parent{t,\vvec y}
=& -r u \parent {t,\vvec y}
\nonumber \\
&+ r \ssum{ikl}{} \mmatc Q k i \mmatc Q i l \vvecc y k \upartial_{\vvecc y l} u \parent{t,\vvec y}
\nonumber \\
&+ \ssum{ijklmn}{} \mmatc \Omega i j \mmatc Q k i \mmatc Q l j \mmatc Q j m \mmatc Q i n \vvecc y {k} \vvecc y {l} \upartial^2_{\vvecc y m \vvecc y n}
u \parent{t,\vvec y}
, 
&
t \in [0,T],
~~~
\transpose{\mmat Q} \vvec y \in D^d_{BS} .
\end{align*}

Thanks to the orthogonality of the transformation matrix $\mmat Q$, the first-order operator simplifies to
\begin{align*}
\ssum{ikl}{} \underbrace{\mmatc Q k l \mmatc Q i l }_{= \delta_{ik}} \vvecc y k \upartial_{\vvecc y l} u \parent{t,\vvec y}
=
\ssum{i}{} y_i \upartial_{y_i} u \parent{t,\vvec y}.
\end{align*}
However, the transformed second-order term does not take the form given in \eqref{eq:BSDef} in the general case.
By writing in a tensorized form
\begin{align}
\ssum{ijklmn}{} \mmatc \Omega i j \mmatc Q k i \mmatc Q l j \mmatc Q j m \mmatc Q i n \vvecc y {k} \vvecc y {l} \upartial^2_{\vvecc y m \vvecc y n}
u \parent{t,\vvec y}
= {\Gamma}_{klmn} \vvecc y {k} \vvecc y {l} \upartial^2_{y_m y_n} u \parent{t,\vvec y}
\label{eq:secondOrder}
\end{align}
we have that $\ttens \Gamma$ has in general non-diagonal terms that couple $\vvecc y k$ and $\vvecc y l$ to
$\upartial^2_{\vvecc y m \vvecc y n} u$ for \mbox{$\sset{k,l} \neq \sset{m,n}$}. Another way to write the second-order term is 
\begin{align*}
\mathrm{Tr} \parent{\mmat \Omega \mathrm{diag} \parent{\transpose{\mmat Q} \vvec y} \parent{\transpose{\mmat Q} \parent{ \mmat H u}\mmat Q} \mathrm{diag} \parent{\transpose{\mmat Q} \vvec y} }
.
\end{align*}

Using this notation, we give a particular example of a class of parametrizations of the Black-Scholes model for which the second-order term has the diagonal structure such that the generator $\mathcal L_{BS}$ is essentially one-dimensional.

\begin{corollary}[Effective one-dimensionality of Black-Scholes model when the quadratic form has equal elements]
A Black-Scholes model such that the quadratic form in \eqref{eq:BSDef} satisfies $\mmatc \Omega i j =C$ for $1 \leq i,j \leq d$ has an essentially one-dimensional generator.
\end{corollary}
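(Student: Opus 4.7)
The plan is direct: I check that the ansatz ``the solution inherits the one-dimensional structure of its terminal data'' is consistent with the Black-Scholes operator under the hypothesis $\mmatc \Omega i j = C$. Fix any unit row vector $\mmat N \in \R^{1\times d}$ and any essentially one-dimensional terminal value $w_T(\vvec x) = \zeta_T(\mmat N \vvec x)$. I will seek a solution of the form $w(t,\vvec x) = \zeta(t, \mmat N \vvec x)$ and show that $\zeta$ is forced to satisfy a one-dimensional backward PDE that is well posed, so the full-dimensional PDE indeed admits an essentially one-dimensional solution for every essentially one-dimensional terminal value.

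The key computation proceeds by substitution. Setting $s := \mmat N \vvec x$, the first and second partial derivatives are $\upartial_{\vvecc x i} w = N_i \zeta_s(t,s)$ and $\upartial^2_{\vvecc x i \vvecc x j} w = N_i N_j \zeta_{ss}(t,s)$. The Euler-type drift term collapses as
\begin{align*}
\ssum{i}{} \vvecc x i \upartial_{\vvecc x i} w = \parent{\ssum{i}{} N_i \vvecc x i} \zeta_s = s\, \zeta_s,
\end{align*}
since the coefficients $\vvecc x i$ are exactly what is needed to reconstruct $\mmat N \vvec x$. The second-order term uses the constancy of $\mmat \Omega$ crucially:
\begin{align*}
\ssum{ij}{} \mmatc \Omega i j \vvecc x i \vvecc x j \upartial^2_{\vvecc x i \vvecc x j} w = C \ssum{ij}{} N_i N_j \vvecc x i \vvecc x j \zeta_{ss} = C \parent{\ssum{i}{} N_i \vvecc x i}^2 \zeta_{ss} = C s^2 \zeta_{ss}.
\end{align*}
Note that it is the constancy $\mmatc \Omega i j = C$ (rather than, say, a rank-one factorization $\mmatc \Omega i j = C_i C_j$) that pulls $C$ out of the double sum and allows the quadratic form to collapse for an \emph{arbitrary} direction $\mmat N$.

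Combining these two identities, $\parent{\mathcal L_{BS} w}(t,\vvec x) = -r\zeta + r s \zeta_s + C s^2 \zeta_{ss}$, which depends on $\vvec x$ only through $s$. Hence the full-dimensional PDE \eqref{eq:BSDef} is satisfied by $w(t,\vvec x) = \zeta(t, \mmat N \vvec x)$ provided $\zeta$ solves the one-dimensional Black-Scholes-type equation
\begin{align*}
-\upartial_t \zeta(t,s) = -r \zeta(t,s) + r s \upartial_s \zeta(t,s) + C s^2 \upartial^2_{ss} \zeta(t,s), \qquad \zeta(T,s) = \zeta_T(s),
\end{align*}
which is standard and well posed on $\R_+$ with volatility $\sqrt{2C}$ (endowed with the Dirichlet boundary data prescribed in Remark \ref{rem:boundary}). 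Uniqueness of the full $d$-dimensional problem then forces $w$ to coincide with $\zeta(t, \mmat N \vvec x)$, establishing that $\mathcal L_{BS}$ is essentially one-dimensional. The main subtlety I would need to address carefully is the compatibility of the natural $d$-dimensional domain $D^d_{BS} = \R^d_+$ and its boundary data with the one-dimensional ansatz in an arbitrary direction $\mmat N$; this is handled by the artificial boundary conditions discussed in Remark \ref{rem:boundary}, which preserve the essentially one-dimensional structure since the payoff $g(\pone \cdot)$ depends only on $s$.
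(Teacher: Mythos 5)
Your proof is correct, and it takes a more direct route than the paper's. The paper does not substitute an ansatz at all: it works in the rotated coordinates $\vvec y=\mmat Q\vvec x$ introduced in the proof of Lemma \ref{lem:european} and shows, by expanding the tensor $\Gamma_{klmn}$ of \eqref{eq:secondOrder} and using orthogonality of $\mmat Q$ twice ($\ssum{j}{}\mmatc Q l j \mmatc Q j m=\delta_{lm}$ and $\ssum{i}{}\mmatc Q k i \mmatc Q i n=\delta_{kn}$), that the second-order term collapses to $C\ssum{kl}{}\vvecc y k \vvecc y l \upartial^2_{\vvecc y k \vvecc y l}u$, i.e.\ that the Black--Scholes form \eqref{eq:BSDef} with all-equal quadratic-form entries is invariant under the rotation; essential one-dimensionality then follows from the surrounding discussion (Remark \ref{rem:vanishingWeights} plus the terminal data depending only on $\vvecc y 1$). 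You instead verify the ansatz $w(t,\vvec x)=\zeta(t,\mmat N\vvec x)$ directly in the original coordinates, exploiting the rank-one factorization $\ssum{ij}{}N_i N_j \vvecc x i \vvecc x j=(\mmat N\vvec x)^2$ — which is the same algebraic mechanism, seen from the other side. Your version buys a self-contained argument that carries through to the one-dimensional PDE, its well-posedness, and the uniqueness step, and it makes explicit that the collapse works for an arbitrary direction $\mmat N$; the paper's version buys consistency with its general framework, since the rotated-coordinate computation is exactly the one used to diagnose \emph{which} parametrizations fail to be one-dimensional. The one genuine loose end in your sketch — the mismatch between $D^d_{BS}=\mathbb R^d_+$ and the range of $s=\mmat N\vvec x$ (which need not be $\mathbb R_+$ when $\mmat N$ has mixed signs), together with the degeneracy of $Cs^2\upartial^2_{ss}$ at $s=0$ — is flagged by you and is glossed over to the same extent in the paper, so it does not put you at a disadvantage.
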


\begin{proof}
The proof is direct. Writing out the second-order term \eqref{eq:secondOrder} we get
\begin{align}
&\ssum{ijklmn}{} \mmatc \Omega i j \mmatc Q k i \mmatc Q l j \mmatc Q j m \mmatc Q i n \vvecc y {k} \vvecc y {l} \upartial^2_{\vvecc y m \vvecc y n} u \parent{t,\vvec y}
\nonumber \\
=& C \ssum{iklmn}{} \mmatc Q k i \underbrace{\parent{\ssum{j}{} \mmatc Q l j \mmatc Q j m }}_{\delta_{lm}} \mmatc Q i n \vvecc y {k} \vvecc y {l} \upartial^2_{\vvecc y m \vvecc y n} u \parent{t,\vvec y}
\nonumber \\
=& C \ssum{klmn}{} \delta_{lm} \underbrace{\parent{ \ssum{i}{} \mmatc Q k i \mmatc Q i n }}_{\delta_{kn}} \vvecc y {k} \vvecc y {l} \upartial^2_{\vvecc y m \vvecc y n} u \parent{t,\vvec y}
\nonumber \\
=& C \ssum{klmn}{} \delta_{kn} \delta_{lm} \vvecc y {k} \vvecc y {l} \upartial^2_{\vvecc y m \vvecc y n} u \parent{t,\vvec y}
\nonumber \\
=& C \ssum{kl}{} \vvecc y {k} \vvecc y {l} \upartial^2_{\vvecc y k \vvecc y l} u \parent{t,\vvec y}.
\nonumber
\end{align}
\end{proof}

We have demonstrated that there is a non-trivial set of parametrizations of the Black-Scholes model such that their corresponding generators $\mathcal L_{BS}$ are essentially one-dimensional.

For parametrizations that are not essentially one-dimensional, we still note that the upper and lower bounds \eqref{eq:lowerBoundEquation} and \eqref{eq:dualapprox} still hold.
However, there is no a priori reason to believe that they coincide.
In the next section, we evaluate the bound for a range of parametrizations and argue that these bounds are often close enough to get a practical estimate of the option price.
This is expected due to the Multivariate Black-Scholes model being well approximated by an univariate Black-Scholes model on the one hand and the multivariate Bachelier model on the other.
We have established above that the Markovian projection works for pricing in both the multivariate Bachlier model as well as the univariate Black-Scholes model.
We demonstrate that this property carries over to the multivariate Black-Scholes model as a good approximation.

\section{Numerical implementation}
\label{sec:implementation}

Here, we present a numerical implementation of our proposed method.
First, we describe in Section \ref{subsec:locVol} the methods used to evaluate the coefficients of the relevant PDE \eqref{eq:projectedAmericanBackwardEquation} in $\prq D$.
We briefly introduce the solution of the projected HJB equation in Section \ref{subsec:valFun} and proceed in Section \ref{subsec:forwardeuler} to describe the evaluation of the lower and upper bounds using forward-Euler Monte Carlo simulation.
We finally discuss the errors arising in the numerical methods in Section \ref{subsec:errDecomp} and apply the proposed methods to Bachelier and Black-Scholes models of relevance in Section \ref{sec:examples}

\subsection{Evaluation of local volatility}
\label{subsec:locVol}

So far, we have bypassed the issue of how to evaluate the local projected volatility $\pvola$ in \eqref{eq:barbdef}.
In this section we first describe in Section \ref{subsubsec:laplace} how we may efficiently evaluate the high-dimensional integrals involved in the definition of the projected volatility $\pvola$.
This discussion is followed by an interpolation scheme for extending pointwise evaluations of $\pvola$ into the projected domain $\prq D$ in Section \ref{subsubsec:ext-int}.

\subsubsection{Laplace approximation}
\label{subsubsec:laplace}

To approximate $u_A$ with $\prq u_A$, we must efficiently evaluate the conditional expectations \eqref{eq:baradef} and \eqref{eq:barbdef} that involve high-dimensional integrals.
For the risk-neutral case \eqref{eq:riskNeutralDrift} that is of most interest in financial applications and options pricing, the drift part will trivially project as
\begin{align*}
\pdrift \parent{t,s} =& \expp{\pone \vvec a \parent{t,\eft t} | \poneft t = s, ~ \eft 0 = \vvec x_0}
\\
=& \expp{\pone \parent{r \eft t} | \pone \eft t =s, ~ \eft 0 = \vvec x_0}
\\
 =& r s.
\end{align*}
For the volatility, $\pvola$, we employ the Laplace approximation, by essentially finding an extremal point of the relevant unimodal integrands and applying a second-order approximation around that extremal point.
Along this line, we make the following assumption.
\begin{assumption}
\label{ass:knownDensity}
We assume that the transition density from $\vvec x_0$ to $\vvec y$ $\phi \parent{\vvec y; \vvec x_0}$ $\phi : \mathbb R^d \rightarrow \mathbb R$ corresponding to the process \eqref{eq:firstFullSDE} is a smooth function for $0 \leq t \leq T$ and it is known explicitly.
\end{assumption}
The precise implementation of this approximation can be done in various ways,
but the underlying principle remains the same.
Some of these approaches allow to relax Assumption \ref{ass:knownDensity}.
Below we outline the Laplace approximation for the case where the assumption holds.
For a more detailed account of the use of Laplace approximation, we refer the reader to \citet{shun1995laplace} and \citet{goutis1999explaining}. 

Let  
\begin{align*}
\gamma \parent s = \expp{\Psi \parent{\eft t} | \poneft t = s, ~ \eft 0 = \vvec x_0},
\end{align*}
with $\Psi \parent{\eft t} \in L^2 \parent{\mathbb R}$.
Then, this conditional expectation satisfies
\begin{align}
\expp{ \Psi \parent{\eft t} \theta \parent{\poneft t}  | \eft 0 = \vvec x_0} &=
\expp{ \gamma \parent{\poneft t} \theta \parent{\poneft t} | \eft 0 = \vvec x_0},
\label{eq:auxgamm}
\end{align}
for all $\theta$ such that $ \theta \parent{\pone \cdot} \in L^2 \parent{\mathbb R}$.
Taking in \eqref{eq:auxgamm} $\theta_h \parent x = \frac{1}{h} \mathbf{1}_{2 \absval{x-s}<h}$ for $h > 0$ and letting $h \rightarrow 0^+$
the left-hand of the previous identity becomes a surface integral over a hyperplane
\begin{align*}
{\mathoper{lim}}_{h \rightarrow 0^+}
\expp{ \Psi \parent{\eft t} \theta_h \parent{\poneft t}  }
=
\int_{\pone \vvec x = s} \Psi \parent{\vvec x} \phi \parent{\vvec x ; \vvec x_0} \ddif A\parent{ \vvec x},
\end{align*}
where $\ddif A$ denotes the differential element of the hyperplane.
For the right-hand side we have similarly
\begin{align*}
{\mathoper{lim}}_{h \rightarrow 0^+}
\expp{ \gamma \parent{\poneft t} \theta_h \parent{\poneft t}}
=
\gamma \parent{s} \int_{ \pone \vvec x = s} 
 \phi \parent{\vvec x ; \vvec x_0} \ddif A \parent{ \vvec x}
.
\end{align*}
Setting $\Psi \parent \cdot = \pbbt \parent{t,\cdot}$ and solving for $\gamma \parent s$ in \eqref{eq:auxgamm}, we have
\begin{align}
\parent{\pvola}^2 \parent{t,s}
=
\frac{\int_{\mathbb R^{d-1}} \phi \parent{\vvec x \parent{\vvec z}; \vvec x_0} \parent{\pbbt} \parent{t, \vvec x \parent{\vvec z}} \ddif \vvec z}{\int_{\mathbb R^{d-1}} \phi \parent{\vvec x \parent {\vvec z} ; \vvec x_0} \ddif \vvec z},
\label{eq:laplaceIntegral}
\end{align}%
where we treat the first variable of $\vvec x$ above as the dependent variable,
\begin{align*}
\vvecc x i \parent{\vvec z} &= \vvecc z i , ~~ \forall i>1,
\\
\vvecc x 1 \parent{\vvec z} &= \parent{\vvecc P {11}}^{-1}\parent{s-\ssum{j=2}{d} \vvecc P {1j} \vvecc z j}.
\end{align*}
Emphasizing that we work in $\mathbb R^d$, rather than the possibly bounded domain $D$,
we approximate the integrals in \eqref{eq:laplaceIntegral}, using Laplace approximation.
We replace the unimodal integrands by suitable Gaussian functions centered at their maximizing configurations, $\vvec z^* \in \R^{d-1}$ and \mbox{$\vvec z^\star \in \R^{d-1}$}.

Denoting the integrand by $\expf{f}$ and exploiting the negative-definiteness of the Hessian $\mmat H f$, we may then approximate the integrand by expanding its logarithm $f$ as follows.
\begin{align}
\begin{split}
\int_{\R^{d-1}} \expf{f \parent{\vvec z} }\ddif \vvec z &\approx
 \int_{\R^{d-1}} \expf{f \parent{\vvec z^*} + \frac{ \parent{\vvec z-\vvec z^*}^{T} \parent{\mmat H f}\parent{\vvec z^*} \parent{\vvec z-\vvec z^*}}{2}}\ddif \vvec z
 \\
 &= 
 \expf{f \parent{\vvec z^*}}
 \sqrt{\frac{\parent{2 \pi}^{d-1}}{\mathoper{det} \absval{\parent{\mmat H f} \parent{z^*}}}}
 .
 \end{split}
 \label{eq:multiDimLaplace}
\end{align}
We employ the same approximation for both the denominator and the numerator of  \eqref{eq:laplaceIntegral} and get
\begin{align}
\frac{\int_{\R^{d-1}} \expf{f \parent{\vvec z} } \ddif \vvec z}{\int_{\R^{d-1}} \expf{\tilde f \parent{\vvec z} } \ddif \vvec z}
 \approx
 \expf{f \parent{\vvec z^*} - \tilde f \parent{\vvec {z}^\star}}
 \sqrt{\frac{\mathoper{det} \absval{\parent{\mmat H \tilde f} \parent{z^ \star}}}{\mathoper{det} \absval{\parent{\mmat H f} \parent{z^*}}}} \equiv \tilde b^2_1 \parent{t,s} ,
 \label{eq:fractionOfIntegrals}
\end{align}
where
\begin{align*}
\tilde f \parent{ \vvec z} &= \mathrm{log}
\parent{\phi \parent{\vvec x \parent{\vvec z}; \vvec x_0}  }
\\
 f \parent{ \vvec z} &= \mathrm{log}
\parent{\phi \parent{\vvec x \parent{\vvec z} ; \vvec x_0} \pbbt \parent{t, \vvec x \parent{\vvec z}}}
\end{align*}
and $\vvec {z}^\star$
and
$\vvec {z}^*$ are the critical points for $\tilde f$ and $f$
respectively.

In practice, the critical configurations can be found rapidly by expanding the known integrand, $f$, to second order and applying the Newton's iteration scheme,
\begin{align}
\label{eq:newton}
\vvec z^{\parent{n+1}} &= \parent{\mmat H f \parent{\vvec z^{\parent n}}}^{-1} \nabla f \parent {
\vvec z^{\parent n}} .
\end{align}
The iteration quickly converges to an extremal point, typically within a few dozens of iterations allowing fast evaluation.
Note that in the case of the Black-Scholes model, the density $\phi$ contains a quadratic term, which makes the Newton iteration very robust to the choice of initial configuration $\vvec z ^{\parent 0}$ in \eqref{eq:newton}.

We note that the approximation is rather simple for the case where the density of the process is normal or log-normal, i.e. the original process \eqref{eq:firstFullSDE} corresponds to Bachelier or Black-Scholes model.
\citet{bayer2014asymptotics} consider the CEV model using the heat kernel approximation (see, for example, \citet{yosida1953fundamental}) for the transition density. 

For numerical results on the accuracy of the Laplace approximation, we refer the reader to Appendix \ref{sec:alternates}, where the alternate choices of coordinates for the second-order expansion are discussed, along with their respective accuracies.

\subsubsection{Extrapolation-interpolation to projected domain}
\label{subsubsec:ext-int}

To solve for the projected cost-to-go function, $\prq u_A \parent{t,s}$ in \eqref{eq:projectedAmericanBackwardEquation}, we use the Laplace approximation introduced above to evaluate the projected local volatility in a few points in the domain, $\prq D$.
We extend these values to a truncated domain in which we solve the low dimensional Equation \eqref{eq:projectedAmericanBackwardEquation}. Thanks to the smooth behavior of the the projected volatility, $\pvola$, we only need a relatively low number of evaluations to achieve high accuracy. 

However, to verify that the resulting projected cost-to-go function $\prq u_A$ is indeed a good approximation of $u_A$ using the lower and upper bounds requires Monte Carlo simulation, which is typically costly compared to the solution of the projected backward problem \eqref{eq:backwardEulerScheme}.

To evaluate the projected volatility, $\pvola$, we generate a small Monte Carlo forward-Euler sample of trajectories of the original process \eqref{eq:firstFullSDE}, as $\efvec \parent{t_n, \omega_i}$,
$0 \leq t_n \leq N_t$ and
$1 \leq i \leq M$ for $M \approx 100$, and to evaluate the essential support
$[S^- \parent{t_n}, S^+ \parent{t_n}] \subset \prq D$
of the basket process that satisfies
\begin{equation}
\label{eq:envelope}
\begin{split}
S^- \parent {t_n} = \mathop{\mathrm{min}}_i \poneft {t_n, \omega_i} ~~0 \leq n \leq N_t,
~~1 \leq i \leq M,
\\
S^+ \parent {t_n} = \mathop{\mathrm{max}}_i \poneft {t_n, \omega_i} ~~0 \leq n \leq N_t,
~~1 \leq i \leq M.
\end{split}
\end{equation}
We select a few dozen points equispaced in the intervals $[S^- \parent{t_n}, S^+ \parent{t_n}] $ for each time step $t_n$ and create a polynomial fit for $\pvola$ for each of these instances of time.

\begin{remark}
We note that the projected volatility can only be reliably evaluated inside the area where the density for $\poneft t$ is not negligible.
At the most extreme case, at the initial time, the density of $\poneft 0$ focuses on a single point.
In reality, the appropriate domain for $\prq D$ has the schematic shape depicted in Figure \ref{fig:localVolaSurface}.
However, we carry out our evaluation of $\prq u_A$ in a rectangular domain $[0,T] \times \prq D$ and extrapolate the local volatility into the whole rectangle.
In carrying out the extrapolation, we set a small minimum value for $\parent{\pvola}^2$ to guarantee numerical stability in the backward solver.
\end{remark}

\begin{figure}
\begin{center}
\begin{minipage}{150mm}
\subfigure[Third-order polynomial interpolation for the three-dimensional Black-Scholes model.
Each red line corresponds to an instant of time from $0$ to $T=\frac{1}{2}$ and is obtained through regression of a corresponding set of evaluations indicated through blue crosses.]{
\resizebox*{7cm}{!}{
\input{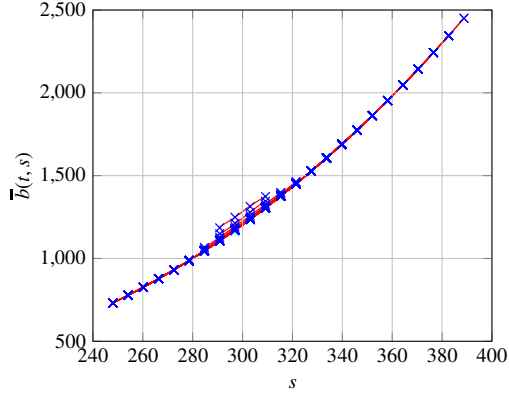}
}\label{fig:localVolaInterpolation}}
\hspace{5 mm}
\subfigure[Local volatility for the projected dynamics in the high likelihood region of the 3-to-1 dimensional example \eqref{eq:3dpars}.
For the corresponding implied volatilities, see Figure \ref{fig:impliedVolatility}]{
\resizebox*{7cm}{!}{
\input{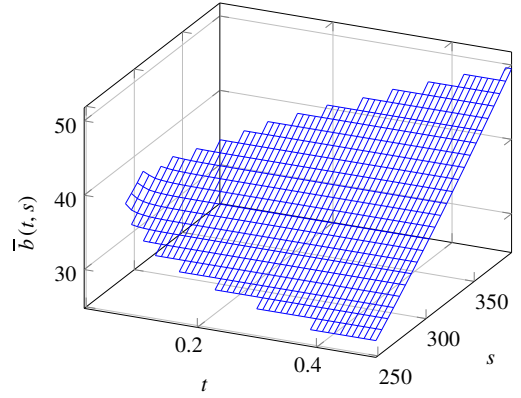}
\label{fig:localVolaSurface}
}
}
\caption{Projected volatility $\tilde b_1 \parent{t,s}$ defined in \eqref{eq:fractionOfIntegrals} and its interpolation in space and time for the 3-to-1 dimensional Black-Scholes model \eqref{eq:3dpars}.
In both the figures, the plots are done for the range of essential support of the density, which expands as $t$ increases.
\label{fig:localvola}}
\end{minipage}
\end{center}
\end{figure}

Note that the envelope \eqref{eq:envelope} is only used to get a rough estimate of where the probability mass of $\poneft {t}$ for $0\le t\le T$ lies and has a very indirect effect on the numerical solution as such.
The resulting numbers of time steps $N_t$ and samples $M$ invested in \eqref{eq:envelope} are small in comparison to the forward-Euler solution of the upper and lower bounds discussed later in Section \ref{subsec:forwardeuler}.

\subsection{Numerical value function}
\label{subsec:valFun}

Once we define the interpolated-extrapolated approximate projected volatility $\tilde b$ by interpolating the approximate projected volatility in \eqref{eq:laplaceIntegral}, we set to define a finite-difference approximation $\dbar u_A$ of the value function $\prq u_A$ that solves \eqref{eq:projectedAmericanBackwardEquation}.
Based on the finite difference operator
\begin{align*}
&\parent{\dbar {\mathcal{L}} \dbar {u}}
\parent{t,s_n}
\\
=&
\parent{\frac{\tilde b^2 \parent{t, s_n}}{2 \Delta s^2}+\frac{rs_n}{2 \Delta s} }
\dbar u \parent{t,s_{n-1}}
-
\parent{r + \frac{\tilde b^2 \parent{t,s_n}}{\Delta s^2}}
\dbar u \parent{t,s_n}
+
\parent{\frac{\tilde b^2 \parent{t, s_n}}{2 \Delta s^2}-\frac{rs_n}{2 \Delta s} }
\dbar u \parent{t,s_{n+1}}
,
\\
& 1 < n < N_s,
\end{align*}
that parallels \citep[Equation (12)]{merton1977valuation} and whose continuous counterpart is $\prq {\mathcal L}$ of \eqref{eq:theReducedEuropeanEquation}, we use a stable backward Euler scheme,
\begin{align}
\begin{split}
\dbar u_A \parent{t_{n-1}^+, s_m} &= \dbar u_A \parent{t_{n}, s_m}+ \parent{\dbar{\mathcal L} \dbar u} \parent{t_{n-1}^+, s_m} \Delta t_n, ~~~ 1\leq n \leq N_t, ~~ 1\leq m \leq N_s,
\\
\dbar u_A \parent{t_{n-1}, s_m} &= \mathoper{max} \parent {\dbar u_A \parent{t_{n-1}^+, s_m}, g \parent{s_m}},
~~~ 1\leq n \leq N_t, ~~ 1\leq m \leq N_s,
\\
\dbar u_A \parent{t_{N_t},s m} &= g \parent{s_m}
~~~ 1\leq m \leq N_s,
\end{split}
\label{eq:backwardEulerScheme}
\end{align}
with the artificial Dirichlet-type boundary condition (see Remark \ref{rem:boundary}) imposed by the payoff
\begin{align}
\begin{split}
\dbar u_A \parent{t_n,s_1} = g \parent{s_1},
\\
\dbar u_A \parent{t_n,s_{N_s}} = g \parent{s_{N_s}}
\end{split}
\end{align}
and a homogeneously spaced, time-independent, mesh $s_m = m \Delta s$.
The choice of the boundary condition has been discussed in the variational setting by \citep[pp. 316]{feng2007variational}.
The upper bound $s_{N_s}$ has to be chosen based on the magnitude of the drift and the volatility for the problem at hand.

The pointwise value function is later extended to the whole domain $\prq D$ of \eqref{eq:BSDef} using a low order interpolant, allowing the evaluation of a discrete early exercise region
\begin{align}
\prq D_{\mathrm{Ex}} = \sset{\parent{t_n, s_m}: 0\leq n\leq N_T, ~ 1\leq m \leq N_s, \dbar u_A \parent{t_n,s_m} = g \parent{s_m}}
\label{eq:approximateEarlyExerciseRegion}
\end{align}
Similarly, for the construction of the dual bound given by \eqref{eq:subOptimalMartingale}, we approximate derivatives of $ \prq u_A$ (Eq. \eqref{eq:BSDef}) using finite differences of $\dbar u_A$ (Eq. \eqref{eq:backwardEulerScheme}).

\subsection{Forward-Euler approximation}
\label{subsec:forwardeuler}

The discrete American put option value $\dbar u_A$ that solves the backward-Euler scheme \eqref{eq:backwardEulerScheme} implies a corresponding discrete early exercise region $\prq D_{\mathrm{Ex}}$ of \eqref{eq:approximateEarlyExerciseRegion}.

To verify the accuracy of the early exercise boundary implied by the discrete option value $\dbar u_A$ as an approximation to the exercise boundary in $u_A$
and to set a confidence interval for the option price, we evaluate the lower and upper bounds in Equations \eqref{eq:lowerBoundEquation} and \eqref{eq:subOptimalMartingale} using Monte Carlo simulations based on (Forward) Euler-Maruyama.
The numerical time-stepping for the asset prices, $\eft t$, is done on a uniform mesh. 
Setting the total number of time steps to coincide with the ones used in the finite difference approximation of $\dbar u_A$ defined in \eqref{eq:backwardEulerScheme}, 
avoids the need for temporal interpolation of $\dbar u_A$.
As mentioned above, we use the following discretization of \eqref{eq:firstFullSDE}:
\begin{align}
\label{eq:forwardEulerDefinition}
\begin{split}
\dbar{\efvec } \parent {t_{n+1}} =& \dbar{\efvec} \parent {t_{n}}+ r \dbar{\efvec} \parent{t_n} \Delta t_n
+
\mmat b \parent{t_{n}, \dbar{\efvec} \parent{t_n}} \Delta \vvec W \parent{t_n},
~~~ 0 \leq n < N_t,
 \\
\dbar {\efvec} \parent{t_0} =& \vvec x,
\end{split}
\end{align}
with $\Delta t_n = t_{n+1}-t_n$ and $\Delta W \parent{t_n} = W \parent{t_{n+1}}- W \parent{t_n} \sim \mathcal N \parent{0,t_{n+1}-t_n}$ and the number of time steps $N_t$.
Correspondingly, we approximate \eqref{eq:discountedPayoff} as
\begin{align}
\dbar Z \parent{t_n} =& \expf{- r t_n} 
g \parent{\pone \dbar{\efvec} \parent{t_n}},~~~
 & 0 \leq n < N_t
.
\end{align}
We use the same underlying Brownian motion to generate approximate trajectories for both the asset $\dbar \efvec$ and the approximation to the martingale $\gmartingale$ in \eqref{eq:subOptimalMartingale} used to construct the upper bound for the option price: 
\begin{align}
\begin{split}
\dbar \gmartingale \parent{t_{n+1}}
=& \dbar \gmartingale \parent{t_{n}}+ \expf{- r t_n}
\transpose{\parent{\dbar \nabla \dbar u_A}} \parent{t_n, \pone \dbar{\efvec} \parent{t_n}} \mmat b \parent{t_n, \dbar{\efvec} {t}} \Delta \vvec W \parent{t_n},
\\
&0 \leq n < N_t,
\\
\dbar \gmartingale \parent {0} =& 0.
\end{split}
\label{eq:discreteMartingaleDefinition}
\end{align}

With the discrete approximations \eqref{eq:forwardEulerDefinition} and \eqref{eq:discreteMartingaleDefinition},
we can estimate an upper bound, $A^+$, and a lower bound, $A^-$, for the option price, $u_A \parent{0, \efvec \parent 0}$, using sample averages of $M$ $i.i.d$ samples, namely
\begin{align}
\begin{split}
A^+_{M,N_t} &= \frac{1}{M}\ssum{i=1}{M} {u^+}\parent{\omega_i},
\\
u^+&= \mathop{\mathrm{max} }_{0 \leq j \leq N_t} 
\parent{\dbar Z \parent{t_j} - \dbar \gmartingale \parent{t_j}} 
\end{split}
\label{eq:maximumSimulation}
\end{align}
and
\begin{equation}
\begin{split}
A^-_{M,N_t} &= \frac{1}{M}\ssum{i=1}{M} {u^-}\parent{\omega_i},
\\
u^-&= \discountzero{\dbar \tau} g \parent{\pone \dbar {\efvec} \parent{\dbar \tau}},
\\
\upartial \prq D_{\mathrm{Ex.}} \parent{t_n} &=
\mathoper{max}
\sset{1 \leq m \leq N_s: \parent{t_n,s_m} \in \prq D_{\mathrm{Ex.}}} 
\\
\dbar \tau &= \mathop{\mathrm{min}}
\sset{0 \leq j \leq N_t: \pone \dbar {\efvec} \parent{t_j} \leq \upartial \prq D_{\mathrm{Ex.} \parent{t_n}}}.
\end{split}
\label{eq:hitTimeSimulation}
\end{equation}

To estimate the bias in the discretized approximations of the price bounds, we generate Monte Carlo samples corresponding to different values of $N_t$ and estimate the difference between the resulting estimators, 
$\absval{A^+_{M,2N_t}- A^+_{M,N_t}}$ and $\absval{A^-_{M,2N_t}- A^-_{M,N_t}}$.
For a discussion on using the forward-Euler scheme for evaluating hitting times as the one in Equation \eqref{eq:hitTimeSimulation}, we refer the reader to \citet{buchmann2003computing, bayer2010adaptive}
 
In order to accelerate the computations of the bounds, we note the possibility of using multilevel estimators instead of those in \eqref{eq:hitTimeSimulation} and \eqref{eq:maximumSimulation} 
\citep{giles2015multilevel}.
This is out of the scope of this work.

In Section \ref{sec:examples}, we present a selected set of test cases for which we evaluate the estimators \eqref{eq:maximumSimulation} and \eqref{eq:hitTimeSimulation}.
We focus in particular on the multivariate Black-Scholes that is both relevant and non-trivial and satisfies Assumption \ref{ass:knownDensity}.
The parametrizations of the Black-Scholes model we study do not feature essentially one-dimensional value functions and thus serve as a test case of our method when the accuracy of the method is not guaranteed a priori.
Still, using the lower and upper bounds, we can analyze the accuracy of our method and verify its accuracy. 
For verification purposes, we include tests on the constant-volatility Bachelier model, 
for which the Markovian projection reproduces the American option prices exactly.

\subsection{Error decomposition}
\label{subsec:errDecomp}

Before proceeding further into the numerical examples
we provide a brief summary of the errors incurred in the numerical solution of our price bounds, decomposing the total error into its constituent parts.
Denoting the estimators of \eqref{eq:maximumSimulation} and \eqref{eq:hitTimeSimulation} as
\begin{align*}
A^{\pm}_{\infty,\infty} = \mathop{\mathrm{lim}}_{M,N_t \rightarrow \infty} A^{\pm}_{M,N_t},
\end{align*}
we have that the option price $u_A$ satisfies
\begin{align*}
A^-_{\infty,\infty} \leq u_A \parent{0, \vvec x} \leq A^+_{\infty,\infty}.
\end{align*}

In practice, we rely on estimators based on finite $M$ and $N_t$. The magnitude of the gap $\absval{A^+_{\infty,\infty}-A^-_{\infty,\infty}}$ is dictated by the approximate value function $\prq u_A$ that gives rise to the inexact stopping time \eqref{eq:hittingTime} 
as well as the dual martingale $M$.
In general, finding an approximate function $\prq u_A$ that approximates the true solution $u_A$ closely might not be possible.
Furthermore, even when a sound one-dimensional approximation $\prq u_A$ exists, we rely on an approximate integration formula to recover it.
Thus, for a general model, we are not able to control the error of our method and the magnitude of the gap $\absval{A^+_{\infty,\infty}-A^-_{\infty,\infty}}$.
However, we are interested in choosing numerical parameters such that we get a reliable and useful estimate of the magnitude of this gap.

In addition to the gap between $A^+_{\infty,\infty}$ and $A^-_{\infty,\infty}$, the difference between $A^{\pm}_{\infty,\infty}$ and the corresponding estimators $A^{\pm}_{M,N_t}$ is of interest. Below, we outline the numerical approximations that give rise to these differences. Besides the fundamental error implied by approximating $\tau ^*$ of \eqref{eq:optimalStoppingTime} with $\stoptimeb$ of \eqref{eq:hittingTime},
there are four main numerical approximations employed in the procedure, with each of them giving rise to a distinct component to the error. These are:
\begin{enumerate}
\item the statistical error due to finite number of samples, $M$, in \eqref{eq:maximumSimulation} and \eqref{eq:hitTimeSimulation},
\item the step size bias introduced in the forward-Euler approximation \eqref{eq:forwardEulerDefinition},
\item the discretization errors of the solution $\prq u$, giving rise to inexact approximations to the early-exercise region and the sensitivity in \eqref{eq:discreteMartingaleDefinition}
\item the Laplace approximation error when evaluating the integrals for the coefficients of the projected dynamics and the corresponding backward solution in \eqref{eq:multiDimLaplace}.
\end{enumerate}

Noting that the choice of the time-stepping scheme implies an optimal dependence between the number of temporal and spatial discretization steps, $N_t$ and $N_s$, and using the optimal $N_s$,
we expand the notation for the estimators $A^-$ and $A^+$ to
\begin{align*}
A^{\pm}_{M,N_t} = A^{\pm}_{M,N_t,N_t,\tilde b_1},
\end{align*}
where the first $N_t$ refers to the number of forward-Euler time steps and the latter to the corresponding steps in the backward solver.
With the triangle inequality, we decompose 
\begin{align*}
\absval{A^{\pm}_{\infty,\infty}-A^{\pm}_{M,N_t}}
=&
\absval{A^{\pm}_{\infty,\infty,\infty,\pvola}-A^{\pm}_{M,N_t,N_t,\tilde b_1}}
\\
\leq&
\absval{A^{\pm}_{\infty,\infty,\infty,\pvola}-A^{\pm}_{\infty,\infty,\infty,\tilde b_1}}
+
\absval{A^{\pm}_{\infty,\infty,\infty,\tilde b_1}-A^{\pm}_{\infty,\infty,N_t,\tilde b_1}}
\nonumber
\\
&+
\absval{A^{\pm}_{\infty,\infty,N_t,\tilde b_1}-A^{\pm}_{\infty,N_t,N_t,\tilde b_1}}
+
\absval{A^{\pm}_{\infty,N_t,N_t,\tilde b_1}-A^{\pm}_{M,N_t,N_t,\tilde b_1}}.
\end{align*}

\begin{figure}
\begin{center}
\begin{minipage}{150mm}
\subfigure[
Convergence of the expected hitting time ($\tau_a$, green) to the early exercise region and the expected maximum ($X_{\mathrm{max}}$, blue) over
the interval $ 0\leq t \leq \frac{1}{2} $
for a 3-dimensional correlated Black-Scholes model \eqref{eq:3dpars} along with the $N_t^{-\frac{1}{2}}$
 reference line (dashed red).]{
\resizebox*{7cm}{!}{
%
%
%
%
\begin{tikzpicture}

\begin{loglogaxis}[
xlabel={$N_t$},
ylabel={Bias},
xmin=1, xmax=100,
ymin=0.001, ymax=100,
axis on top,
every axis plot/.append style={semithick},
xmajorgrids,
ymajorgrids,
legend entries={{$N_T^{-\frac{1}{2}}$},{$X_{\mathrm{max}}$},{$\tau_a$}}
]

\addplot [red, dashed]
coordinates {
(2,14.142135623731)
(4,10)
(8,7.07106781186548)
(16,5)
(32,3.53553390593274)
(64,2.5)
};

\addplot+[blue,mark=+,error bars/.cd, y dir=both,y explicit]
 coordinates { 
(2,9.67473741900102)    +- (0.17,0.17) 
(4,6.20904302565339) +- (0.17,0.17) 
(8,3.96238923424784) +- (0.17,0.17) 
(16,2.46719916962752)    +- (0.17,0.17) 
(32,1.36855002950341)  +- (0.17,0.17) 
(64,0.582047906118817)    +- (0.17,0.17) 
};

\addplot+[green!50.0!black,mark=x,error bars/.cd, y dir=both,y explicit]
 coordinates { 
(2,0.06501375)    +- (0.0021,0.0021) 
(4,0.04578) +- (0.0021,0.0021) 
(8,0.03070125) +- (0.0021,0.0021) 
(16,0.01951125)    +- (0.0021,0.0021) 
(32,0.011005)  +- (0.0021,0.0021) 
(64,0.00434374999999998) +- (0.0021,0.0021) 
};

\path [draw=black, fill opacity=0] (axis cs:13,1)--(axis cs:13,1);

\path [draw=black, fill opacity=0] (axis cs:1,13)--(axis cs:1,13);

\path [draw=black, fill opacity=0] (axis cs:13,0)--(axis cs:13,0);

\path [draw=black, fill opacity=0] (axis cs:0,13)--(axis cs:0,13);

\end{loglogaxis}

\end{tikzpicture}
}
\label{fig:convergences}
}
\hspace{5 mm}
\subfigure[The implied volatility for the American put option corresponding to the local volatility of the projected 3-dimensional Black-Scholes model \eqref{eq:3dpars}.
Each of the values for $\sigma_{\mathrm{imp.}}$ produces the option prices for their respective strike price, $K$, for the American option price, when the local volatility is given by the projected dynamics.]{
\resizebox*{7cm}{!}{
\input{./impliedVola_1490063127_mod.tikz}
}\label{fig:impliedVolatility}
}
\caption{}
\end{minipage}
\end{center}
\end{figure}

For the Laplace error $\absval{A^{\pm}_{\infty,\infty,\infty,\pvola}-A^{\pm}_{\infty,\infty,\infty,\tilde b_1}}$, there is no simple and practical way to control the error.
We estimate the error through the numerical experiments as presented in the appendix \ref{sec:alternates}.
All the other components are well defined and can be controlled using standard arguments in their respective numerical methods.
Firstly, with regard to the finite sample size, we can, given a confidence parameter, exploit the central limit theorem (CLT) and control the statistical error in probability by increasing the sample size,
\begin{equation}
\absval{A^{\pm}_{\infty,N_t,N_t,\tilde b_1}-A^{\pm}_{M,N_t,N_t,\tilde b_1}} = \mathcal{O}_P \parent{M^{-\frac{1}{2}}}.
\label{eq:statisticalError}
\end{equation}
As for the temporal discretization parameter, for the backward-Euler method, we set $N_s$ in \eqref{eq:backwardEulerScheme} to $N_s^2 = c N_t$, giving rise to the discretization error,
\begin{align}
\absval{A^{\pm}_{\infty,\infty,\infty,\tilde b_1}-A^{\pm}_{\infty,\infty,N_t,\tilde b_1}}
= \bigo {N_t^{-1}}.
\label{eq:backwardEulerError}
\end{align}
Finally, for the simulation of the extremal point of the dual martingale in \eqref{eq:subOptimalMartingale} and the hitting time into the early exercise region implied by $\prq u_A$, we have
\begin{align}
\absval{A^{\pm}_{\infty,\infty,N_t,\tilde b_1}-A^{\pm}_{\infty,N_t,N_t,\tilde b_1}} = \bigo{N_t^{-\frac{1}{2}}},
\label{eq:forwardSimulationError}
\end{align}
for each,
as shown in Figure \ref{fig:convergences}.

The novel contribution of this work is the use of the projected process for determining an implied exercise strategy for the true pricing problem \eqref{eq:americanPriceDefinition} using the projected value function $\prq u_A$ that solves \eqref{eq:projectedAmericanBackwardEquation}.
In the following sections, we wish to demonstrate the feasibility of this approach, and measure the resulting error, choosing parameters such that the errors \eqref{eq:statisticalError}, \eqref{eq:backwardEulerError} and \eqref{eq:forwardSimulationError} are small compared to the error implied by the use of the surrogate process and its approximate evaluation using Laplace approximation. 
We proceed to do this in the following section.

\subsection{Examples}
\label{sec:examples}

This section demonstrates the performance of our proposed method for pricing American put options written on a basket.
First, we verify our results using a $50$-dimensional Bachelier model in Section \ref{subsubsec:bacPut}.
Having verified that our numerical implementation reproduces the results expected based on Lemma \ref{lem:bachelierReduction}, we proceed to apply the method in multivariate Black-Scholes model in Sections \ref{subsubsec:3dimBS}-\ref{subsubsec:25dimBS}.

\subsubsection{American put on a basket in the Bachelier model}
\label{subsubsec:bacPut}

Here we wish to verify the numerical implementation of the finite difference solver for the approximate value function {$\dbar u_A$} of \eqref{eq:backwardEulerScheme}
and the resulting Monte Carlo estimators, \eqref{eq:hitTimeSimulation} and \eqref{eq:maximumSimulation}, for the upper and lower bounds, respectively.
We examine the solution of a 50-dimensional American put option in the Bachelier model (see Eqs. \eqref{eq:riskNeutralDrift} and \eqref{eq:BachelierDefinition}).
As our prime test case, we focus on the at-the-money put with maturity $T= \frac{1}{4}$.
To guarantee a non-trivial early exercise region, we set a relatively high interest rate of $r=0.05$.
We choose an upper diagonal $\mmat \Sigma$ with the diagonal elements $ \Sigma_{ii}=20$ for all assets $1 \leq i \leq 50$ and draw the off-diagonal components $\mmatc \Sigma i j$, $j>i$ from a standard normal distribution.

\begin{figure}
\begin{center}
\begin{minipage}{150mm}
\subfigure[The upper $A^+_{128000,N_t}$ (Blue) and lower bound $A^-_{128000,N_t}$ (Green) for the American put price for varying numbers of time steps, $N_t$, in the forward-Euler discretization.
Error bounds correspond to 95 percent confidence level. For the corresponding behavior of the relative width of the confidence interval, see Figure \ref{fig:bachelierRelativeError}.]{
\resizebox*{7cm}{!}{
\begin{tikzpicture}
\begin{semilogxaxis}[
xmin = 1, xmax = 10000,
ymin = 11, ymax = 15,
xmajorgrids,
ymajorgrids,
xlabel={$N_t$},
ylabel={Price}
]
\addplot+[green!50.0!black,mark=none,error bars/.cd, y dir=both,y explicit]
coordinates {
(8.000000,11.813338)  -= (0.0,0.047516) += (0.0, 0.047516)
(16.000000,12.279185)  -= (0.0,0.044798) += (0.0, 0.044798)
(32.000000,12.497494)  -= (0.0,0.042615) += (0.0, 0.042615)
(64.000000,12.566215)  -= (0.0,0.043522) += (0.0, 0.043522)
(128.000000,12.550167)  -= (0.0,0.042344) += (0.0, 0.042344)
(256.000000,12.675763)  -= (0.0,0.044463) += (0.0, 0.044463)
(512.000000,12.653432)  -= (0.0,0.045228) += (0.0, 0.045228)
(1024.000000,12.821465)  -= (0.0,0.045595) += (0.0, 0.045595)
(2048.000000,12.684184)  -= (0.0,0.045591) += (0.0, 0.045591)
(4096.000000,12.691837)  -= (0.0,0.045632) += (0.0, 0.045632)
};
\addplot+[blue,mark=none,error bars/.cd, y dir=both,y explicit]
coordinates {
(8.000000,14.096293)  -= (0.0,0.026023) += (0.0, 0.026023)
(16.000000,13.935821)  -= (0.0,0.017910) += (0.0, 0.017910)
(32.000000,13.609304)  -= (0.0,0.012328) += (0.0, 0.012328)
(64.000000,13.313179)  -= (0.0,0.008484) += (0.0, 0.008484)
(128.000000,13.089267)  -= (0.0,0.005895) += (0.0, 0.005895)
(256.000000,12.956044)  -= (0.0,0.004192) += (0.0, 0.004192)
(512.000000,12.857699)  -= (0.0,0.002979) += (0.0, 0.002979)
(1024.000000,12.787069)  -= (0.0,0.002100) += (0.0, 0.002100)
(2048.000000,12.735925)  -= (0.0,0.001483) += (0.0, 0.001483)
(4096.000000,12.709758)  -= (0.0,0.001056) += (0.0, 0.001056)
};
\end{semilogxaxis}
\end{tikzpicture}
}
\label{fig:convergingBounds}
}
\hspace{5 mm}
\subfigure[The distance of the error bounds relative to the underlying option price for the at-the-money put for the test case presented in Figure \ref{fig:convergingBounds}.
The estimate for the uncertainty is achieved as a combination of the upper and lower bounds presented in \ref{fig:convergingBounds}, together with an estimate of the statistical error and bias for both.]{
\resizebox*{7cm}{!}{
\begin{tikzpicture}
\begin{axis}[
xlabel={$N_t$},
ylabel={Error},
xmin=1, xmax=10000,
ymin=0.0001, ymax=1,
xmode=log,
ymode=log,
xmajorgrids,
ymajorgrids
]
\addplot [blue]
table {%
8 0.044514177670656
16 0.0444079841637766
32 0.0320127229657069
64 0.0211949732468065
128 0.0146173861352016
256 0.00981883847693009
512 0.00556450225056747
1024 0.00316459468862938
2048 0.00385915193866425
4096 0.000141037091093294
};
%
%
%
%
\end{axis}
\end{tikzpicture}
}
\label{fig:bachelierRelativeError}
}
\caption{Convergence of the upper and lower bounds for the Bachelier model described in Section \ref{subsubsec:bacPut} and the resulting relative errors for the American and at-the-money put options.
\label{fig:bachelierFigures}}
\end{minipage}
\end{center}
\end{figure}
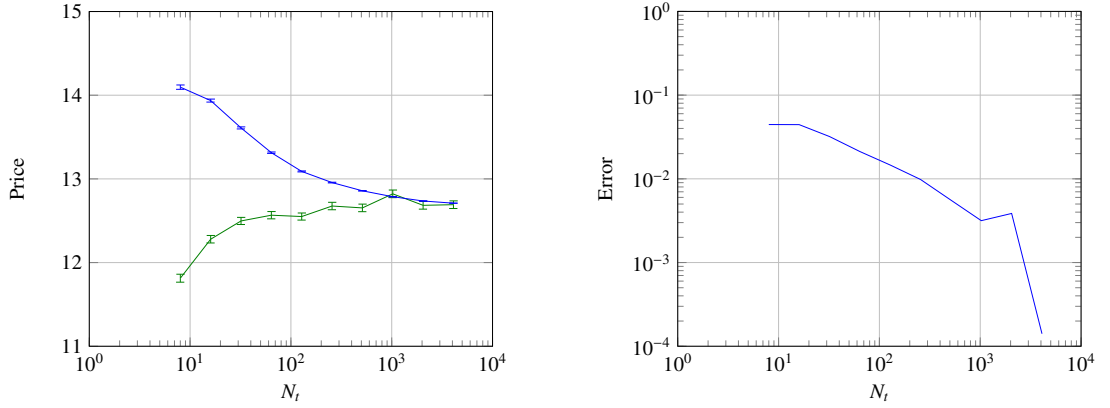

Simulating the asset dynamics, $\dbar {\efvec}$, for a sequence of time discretizations, $N_t = 1000 \times 2^{k}$, $4 \leq k \leq 11$, we observe that as $N_t$ increases, the difference between our upper and lower bounds for $u_A(0,\vvec x)$ becomes negligible.
Figure \ref{fig:convergingBounds} shows this behavior of converging bounds, alongside the statistical error of the upper bound estimator, $A^+$, which is far overshadowed by the corresponding statistical error from the lower bound estimator, $A^-$.
Indeed, as the number of time steps in the forward simulation increases, we see the upper bound intersecting the confidence interval of the lower bound,
resulting in the sub-one-percent relative error of the method.

\begin{samepage}
\subsubsection{3-to-1 dimensional Black-Scholes model}
\label{subsubsec:3dimBS}

As the first test parametrization of the Black-Scholes model we consider the case of a correlated 3-dimensional Black-Scholes model (see Eqs.\eqref{eq:riskNeutralDrift} and \eqref{eq:BlackScholesDefinition}).
We decompose the volatility function into the individual volatilities, $\volavec$, and the correlation structure of asset returns. We denote with $\mmat G$ the Cholesky decomposition of the correlation matrix of the log-returns
\begin{equation}
\mmatc \Sigma i j = \volavec _i \mmatc G i j .
\label{eq:sigmaDecomposition}
\end{equation}
We set the numerical parameters of our test case to
\begin{align}
\begin{split}
r &= 0.05, 
\\
\volavec &= \transpose{\parent{0.2, ~ 0.15, ~ 0.1}}, \\
\mmat G \transpose{\mmat G} &= \parent{\begin{matrix}
 1 & 0.8 & 0.3 \\
 0.8 & 1 & 0.1 \\
 0.3 & 0.1 & 1 
 \end{matrix}},
 \label{eq:3dpars}
 \end{split}
\end{align}
and a portfolio of equally weighted assets
\begin{align}
\pone = \arrayh{1,1,1},
\end{align}
as a representative test case of three moderately correlated assets in a high short rate environment.
The projected local volatility features noticeable skew, as shown in Figures \ref{fig:localVolaSurface} and \ref{fig:impliedVolatility}.

We evaluate the Laplace-approximated projected volatility, $\tilde b$, on a mesh of a few dozen nodes in the region where the the density of the portfolio differs significantly from zero.
Performing a regression to a third-degree polynomial on this mesh provides a close fit as seen in Figure \ref{fig:localVolaInterpolation}.
The third-order approximation also allows us to extend the evaluation of the projected volatility outside the domain in which the Laplace approximation is well behaved.
Furthermore, the coefficients of the low-order polynomial fit to the projected volatility are well approximated by a constant, or a linear function of time.
This means that for large times we can solve for the projected volatility $\pvola$ particularly sparsely in time and still have an acceptable interpolation error.

To assess the accuracy of the method, we focus on a set of put options at $T= \frac{1}{2}$ with varying moneyness and report relative numerical accuracy in the approximation of around one percent.
For the results of the prices and the corresponding relative errors, we refer to Figure \ref{fig:3dPricesAndErrors}.

\begin{figure}
\begin{center}
\begin{minipage}{150mm}
\subfigure[European (Blue) and American (Green) option prices, using forward-Euler Monte Carlo approximation and projected volatility based stopping rule and a martingale bound.]{
\resizebox*{7cm}{!}{

\begin{tikzpicture}

\begin{axis}[
ymode=log,
xlabel={$K$},
ylabel={Price},
xmin=265, xmax=335,
ymin=0.1, ymax=100,
axis on top,
xmajorgrids,
ymajorgrids
]
\addplot[green!50.0!black,mark=none,error bars/.cd, y dir=both,y explicit]
coordinates {
(270,0.63) +- (0,0.048)
(280,1.48) +- (0,0.048)
(290,3.17) +- (0,0.048)
(300,6.31) +- (0,0.048)
(310,11.55) +- (0,0.048)
(320,19.45) +- (0,0.048)
(330,30.2) +- (0,0.048)
};

\addplot[blue,mark=none,error bars/.cd, y dir=both,y explicit]
coordinates {
(270,0.50) +- (0,0.048)
(280,1.26) +- (0,0.048)
(290,2.51) +- (0,0.048)
(300,5.01) +- (0,0.048)
(310,9.25) +- (0,0.048)
(320,15.81) +- (0,0.048)
(330,25.0) +- (0,0.048)
};


\end{axis}

\end{tikzpicture}
}
\label{fig:3doptionprices}
}
\hspace{5 mm}
\subfigure[Relative errors in evaluating the American (green) and European (blue) option prices using forward-Euler Monte Carlo approximation for varying ranges of moneyness.
At high strike, $K$ we observe trivial stopping time $\pprob{\stoptimeb =0}=1$.]{
\resizebox*{7cm}{!}{

\begin{tikzpicture}

\begin{axis}[
ymode=log,
xlabel={$K$},
ylabel={Price},
xmin=265, xmax=335,
ymin=0.001, ymax=0.100,
axis on top,
xmajorgrids,
ymajorgrids
]
\addplot[blue,mark=none]
coordinates {
(270,0.035) +- (0,0.0)
(280,0.018) +- (0,0.0)
(290,0.0099) +- (0,0.0)
(300,0.006) +- (0,0.0)
(310,0.0039) +- (0,0.0)
(320,0.0028) +- (0,0.0)
(330,0.0022) +- (0,0.0)
};

\addplot[green!50.0!black,mark=none]
coordinates {
(270,0.047)
(280,0.025)
(290,0.016)
(300,0.013)
(310,0.01)
(320,0.008)
(330,0.0024)
};


\end{axis}

\end{tikzpicture}
}
\label{fig:3doptionerrors}
}
\caption{Both European and American put option prices for the test case \eqref{eq:3dpars} and the corresponding relative errors. For comparison of the solvers, identical spatial and temporal meshes, sample sizes and number of Monte Carlo realizations are used for solving both the European and the American options.
\label{fig:3dPricesAndErrors}}
\end{minipage}
\end{center}
\end{figure}
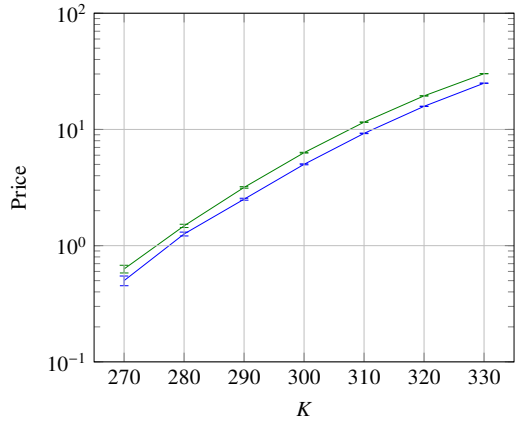
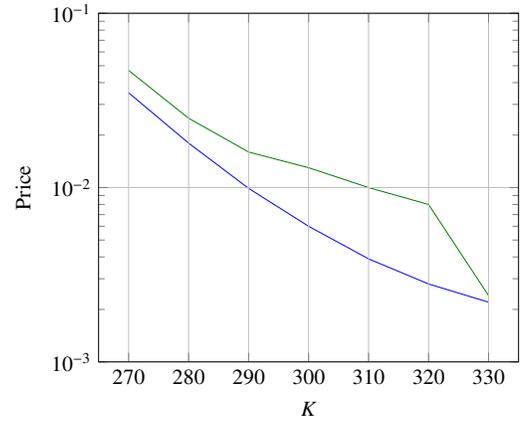

\begin{figure}
\begin{center}
\begin{minipage}{150mm}
\subfigure[Finite-difference approximation to the American value function of the 3-to-1-dimensional projected problem \eqref{eq:3dpars}.
Note that the values of the value function are used to determine an early-exercise boundary only and have no real-world interpretation except at the point $s=300$, $t=0$.]{
\resizebox*{7cm}{!}
{
\input{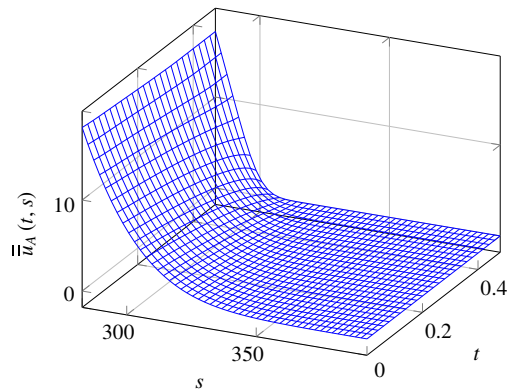}
}\label{fig:3dValueFunctionPlot} }
\hspace{5 mm}
\subfigure[Numerical finite-difference approximation of early exercise boundary of the 3-to-1-dimensional projected problem \eqref{eq:3dpars} with maturity $T=0.5$ for at-the-money put option.
A slight kink at $t<0.05$ resulting from the drop in projected volatility as seen in Figure \ref{fig:localvola} clearly visible.]{
\resizebox*{7cm}{!}{
\input{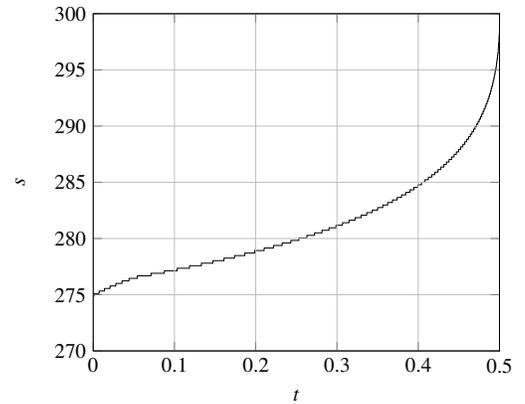}
}
\label{fig:3dEarlyExerciseBoundary}}
\caption{The value function of the 3-to-1-dimensional Black-Scholes example \eqref{eq:3dpars} and the corresponding early exercise boundary.
\label{sample-figure2}}
\end{minipage}
\end{center}
\end{figure}
\end{samepage}

\newpage

\begin{samepage}
\subsubsection{10-to-1 dimensional Black-Scholes model}
\label{subsubsec:10dimBS}

Next, we consider an example similar to \eqref{eq:3dpars}, increasing the number of dimensions to ten.
Continuing with the decomposition \eqref{eq:sigmaDecomposition}, we set
\begin{align}
\label{eq:10dpars}
\begin{split}
r &= 0.05, \\
\vvecc \sigma i &= 0.125, ~~ 1\le i\le 10 
, \\
\mmat G \transpose{\mmat G} &= \parent{\begin{matrix}
1 & 0.2 & 0.2 & 0.35 & 0.2 & 0.25 & 0.2 & 0.2 & 0.3 & 0.2 \\
0.2 & 1 & 0.2 & 0.2 & 0.2 & 0.125 & 0.45 & 0.2 & 0.2 & 0.45 \\
0.2 & 0.2 & 1 & 0.2 & 0.2 & 0.2 & 0.2 & 0.2 & 0.45 & 0.2 \\
0.35 & 0.2 & 0.2 & 1 & 0.2 & 0.2 & 0.2 & 0.2 & 0.425 & 0.2 \\
0.25 & 0.125 & 0.2 & 0.2 & 1 & 0.2 & 0.2 & 0.5 & 0.35 & 0.2 \\
0.2 & 0.45 & 0.2 & 0.2 & 0.2 & 1 & 0.2 & 0.2 & 0.2 & 0.2 \\
0.2 & 0.45 & 0.2 & 0.2 & 0.2 & 0.2 & 1 & 0.2 & 0.2 & 0.2 \\
0.2 & 0.2 & 0.2 & 0.2 & 0.2 & 0.2 & 0.2 & 1 & 0.2 & -0.1 \\
0.3 & 0.2 & 0.45 & 0.425 & 0.5 & 0.35 & 0.2 & 0.2 & 1 & 0.2 \\
0.2 & 0.45 & 0.2 & 0.2 & 0.2 & 0.2 & 0.2 & -0.1 & 0.2 & 1 
 \end{matrix}}
 .
 \end{split}
\end{align}

We evaluate a sequence of put options with varying moneyness for the equally weighted portfolio of assets namely we set $\mmatc P 1 i =1$, for all indices.
As before, we observe a relative accuracy of a few percent, with decreasing relative error as moneyness increases.
As in the previous case, with extreme moneyness, we notice the tendency for an exercise at the initial time, resulting in a variance drop of the estimators and subsequently the relative error, as shown in Figure \ref{fig:10dFigures}. The behavior of the price uncertainty of the American and European options in the 10-dimensional case, as a function of the number of time steps, $N_t$, is illustrated in Figure \ref{fig:BSRelativeErrorNt}.

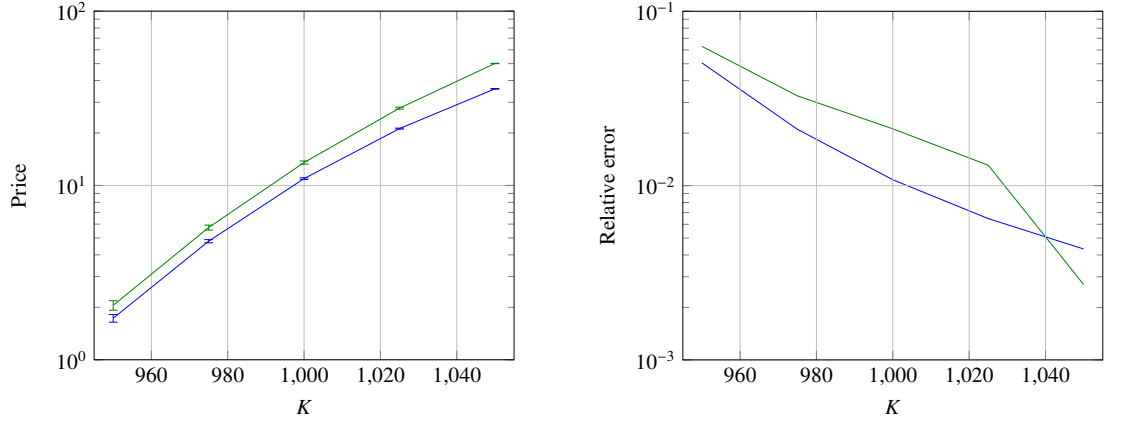
\begin{figure}
\begin{center}
\begin{minipage}{150mm}
\subfigure[European (Blue) and American (Green) put option prices for the 10-dimensional Black-Scholes test case \eqref{eq:10dpars} at $T=0.5$, using forward-Euler Monte Carlo approximation and projected volatility based stopping rule and a martingale bound.]{
\resizebox*{7cm}{!}{
\begin{tikzpicture}
\begin{semilogyaxis}[
xmin = 945, xmax = 1055,
ymin = 1, ymax = 100.0,
xlabel = {$K$},
ylabel = {Price},
xmajorgrids,
ymajorgrids
]
\addplot+[green!50.0!black,mark=none,error bars/.cd, y dir=both,y explicit]
coordinates {
(950.000000,2.053186)  -= (0.0,0.129684) += (0.0, 0.129684)
(975.000000,5.735805)  -= (0.0,0.188116) += (0.0, 0.188116)
(1000.000000,13.536625)  -= (0.0,0.286940) += (0.0, 0.286940)
(1025.000000,27.768209)  -= (0.0,0.363631) += (0.0, 0.363631)
(1050.000000,50.100563)  -= (0.0,0.135498) += (0.0, 0.135498)
};
\addplot+[blue,mark=none,error bars/.cd, y dir=both,y explicit]
coordinates {
(950.000000,1.730460)  -= (0.0,0.087534) += (0.0, 0.087534)
(975.000000,4.788305)  -= (0.0,0.100787) += (0.0, 0.100787)
(1000.000000,10.949039)  -= (0.0,0.118257) += (0.0, 0.118257)
(1025.000000,21.195846)  -= (0.0,0.137057) += (0.0, 0.137057)
(1050.000000,35.846716)  -= (0.0,0.155098) += (0.0, 0.155098)
};
\end{semilogyaxis}
\end{tikzpicture}
}
\label{fig:10doptionprices}
}
\hspace{5 mm}
\subfigure[Relative errors in evaluating the American (green) and European (blue) option prices for the 10-dimensional Black-Scholes model \eqref{eq:10dpars} using forward-Euler Monte Carlo approximation with varying ranges of moneyness.]{
\resizebox*{7cm}{!}{

\begin{tikzpicture}

\begin{semilogyaxis}[
xlabel={$K$},
ylabel={Relative error},
xmin=945, xmax=1055,
ymin=0.001, ymax=0.1,
axis on top,
xmajorgrids,
ymajorgrids
]
\addplot [blue]
coordinates {
(950,0.0505843714708641)
(975,0.0210486730847306)
(1000,0.0108006934702035)
(1025,0.00646620708246615)
(1050,0.00432669729514589)
};
\addplot [green!50.0!black]
coordinates {
(950,0.0629459282648815)
(975,0.0327278344993199)
(1000,0.0211695244991248)
(1025,0.0130856904332922)
(1050,0.002704557308929)
};
\path [draw=black, fill opacity=0] (axis cs:13,1)--(axis cs:13,1);
\path [draw=black, fill opacity=0] (axis cs:1,13)--(axis cs:1,13);
\path [draw=black, fill opacity=0] (axis cs:13,0)--(axis cs:13,0);
\path [draw=black, fill opacity=0] (axis cs:0,13)--(axis cs:0,13);

\end{semilogyaxis}

\end{tikzpicture}
}
\label{fig:10doptionerrors}
}
\caption{Convergence of the upper and lower bounds for the 10-to-1 dimensional Black-Scholes model \eqref{eq:10dpars} and the resulting relative errors. As in Figure \ref{fig:3dPricesAndErrors}, same numerical parameters have been used for solving both the European and the American prices for ease of comparison.
\label{fig:10dFigures}}
\end{minipage}
\end{center}
\end{figure}

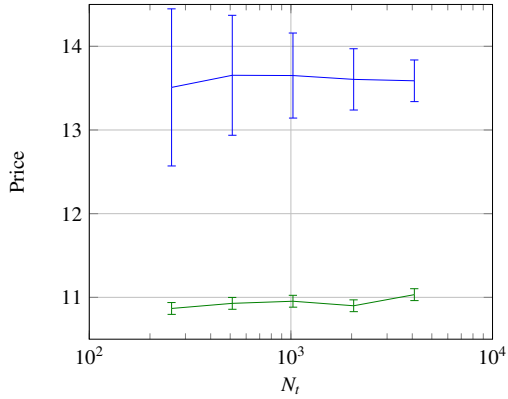
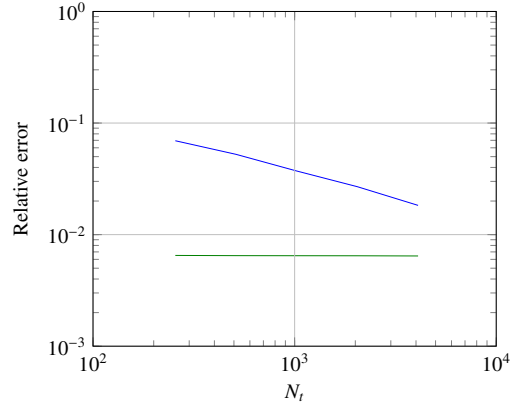
\begin{figure}
\begin{center}
\begin{minipage}{150mm}
\subfigure[Price uncertainty as a function of the number of time steps $N_t$ for the 10-dimensional Black-Scholes model for the American option (Blue) and the corresponding European option (Green) ]{
\resizebox*{7cm}{!}{
\begin{tikzpicture}
\begin{semilogxaxis}
[
xmin = 100,
xmax = 10000,
ymin = 10.5,
ymax = 14.5,
xmajorgrids,
ymajorgrids,
xlabel={$N_t$},
ylabel={Price}
]
\addplot+[blue,mark=none,error bars/.cd, y dir=both,y explicit]
coordinates {
(256.000000,13.508875)  -= (0.0,0.938716) += (0.0, 0.938716)
(512.000000,13.653647)  -= (0.0,0.716678) += (0.0, 0.716678)
(1024.000000,13.650519)  -= (0.0,0.508103) += (0.0, 0.508103)
(2048.000000,13.604665)  -= (0.0,0.366137) += (0.0, 0.366137)
(4096.000000,13.587851)  -= (0.0,0.248830) += (0.0, 0.248830)
};
\addplot+[green!50.0!black,mark=none,error bars/.cd, y dir=both,y explicit]
coordinates {
(256.000000,10.867271)  -= (0.0,0.070737) += (0.0, 0.070737)
(512.000000,10.928289)  -= (0.0,0.070821) += (0.0, 0.070821)
(1024.000000,10.953565)  -= (0.0,0.070873) += (0.0, 0.070873)
(2048.000000,10.899579)  -= (0.0,0.070449) += (0.0, 0.070449)
(4096.000000,11.032332)  -= (0.0,0.071019) += (0.0, 0.071019)
};
\end{semilogxaxis}
\end{tikzpicture}
}
\label{fig:BSAmericanEuropean}
}
\hspace{5 mm}
\subfigure[Relative error in estimating the American (Blue) And the European (Green) option price.]{
\resizebox*{7cm}{!}{
%
%
%
%
\begin{tikzpicture}

\begin{loglogaxis}[
xlabel={$N_t$},
ylabel={Relative error},
xmin=100, xmax=10000,
ymin=0.001, ymax=1,
axis on top,
xmajorgrids,
ymajorgrids
]
\addplot [green!50.0!black]
coordinates {
(256,0.00650919630900959)
(512,0.00648053365129792)
(1024,0.0064703428657088)
(2048,0.00646349356328063)
(4096,0.00643736176398458)

};
\addplot [blue]
coordinates {
(256,0.0694168981887832)
(512,0.0524305304628158)
(1024,0.0371773004833202)
(2048,0.0268785976794969)
(4096,0.0182889075943288)

};

\end{loglogaxis}

\end{tikzpicture}
\label{fig:BSRelativeErrorNt}
}
}
\caption{Price uncertainty in the 10-dimensional Black-Scholes model \eqref{eq:10dpars} with varying numbers of time steps $N_t$ in the forward-Euler monte Carlo.}
\label{fig:bsConvergences}
\end{minipage}
\end{center}
\end{figure}

\end{samepage}

\subsubsection{25-to-1 dimensional Black-Scholes model}
\label{subsubsec:25dimBS}

Finally, we consider a case with a high dimension that is certainly beyond the reach of most PDE solvers.
We choose the 25-dimensional GBM considered by \citet{bayer2016smoothing}. For the remaining parameters, we set
\begin{equation}
\begin{split}
\vvecc X i \parent 0 &= 100,    ~~~~ i \in \sset{1,2 ,\dots , 25}\\
r &= 0.05 ,
\end{split}
\label{eq:d25pars}
\end{equation}
and evaluate the options with equal portfolio weights,
$\mmatc P 1 i =1$, \mbox{$i \in \sset{1,2,\dots, 25}$}.

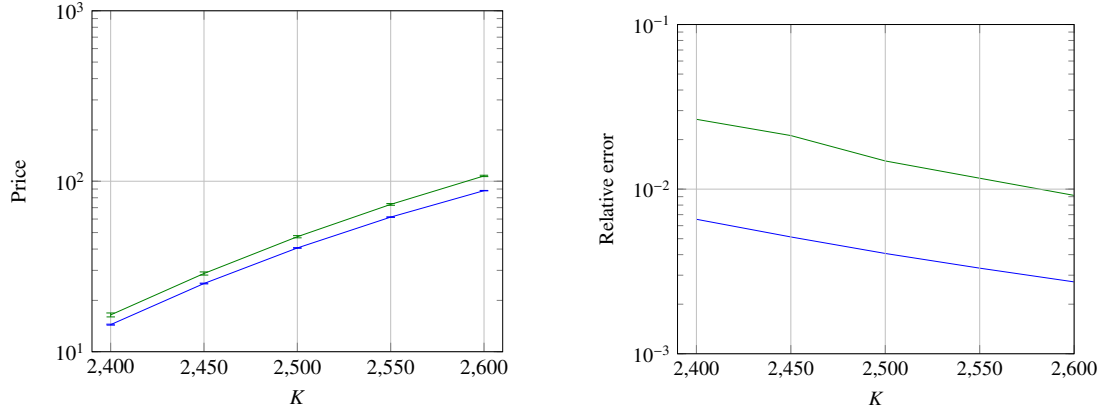
\begin{figure}
\begin{center}
\begin{minipage}{150mm}
\subfigure[European (Blue) and American (Green) put option prices for the 25-dimensional Black-Scholes test case at $T=0.5$ of Section \ref{subsubsec:25dimBS}, using forward-Euler Monte Carlo approximation and projected volatility based stopping rule and a martingale bound.]{
\resizebox*{7cm}{!}{
\begin{tikzpicture}
\begin{semilogyaxis}
[
xmin = 2390, xmax = 2610,
ymin = 10, ymax = 1000.0,
xlabel = {$K$},
ylabel = {Price},
xmajorgrids,
ymajorgrids
]

\addplot+[green!50.0!black,mark=none,error bars/.cd, y dir=both,y explicit]
coordinates {
(2400.000000,16.439245)  -= (0.0,0.437556) += (0.0, 0.437556)
(2450.000000,28.765299)  -= (0.0,0.609603) += (0.0, 0.609603)
(2500.000000,47.267012)  -= (0.0,0.701998) += (0.0, 0.701998)
(2550.000000,73.130822)  -= (0.0,0.852602) += (0.0, 0.852602)
(2600.000000,107.458191)  -= (0.0,0.984995) += (0.0, 0.984995)
};
\addplot+[blue,mark=none,error bars/.cd, y dir=both,y explicit]
coordinates {
(2400.000000,14.412621)  -= (0.0,0.094540) += (0.0, 0.094540)
(2450.000000,25.120811)  -= (0.0,0.128690) += (0.0, 0.128690)
(2500.000000,40.604863)  -= (0.0,0.165123) += (0.0, 0.165123)
(2550.000000,61.603717)  -= (0.0,0.203966) += (0.0, 0.203966)
(2600.000000,88.075997)  -= (0.0,0.240648) += (0.0, 0.240648)
};
\end{semilogyaxis}
\end{tikzpicture}
}
\label{fig:25doptionprices}
}
\hspace{5 mm}
\subfigure[Relative errors in evaluating the American (green) and European (blue) option prices for the 25-dimensional Black-Scholes model using forward-Euler Monte Carlo approximation for varying ranges of moneyness. As in earlier Figures \ref{fig:3dPricesAndErrors} and \ref{fig:10dFigures} identical numerical parameters are used for both the American and European options.]{
\resizebox*{7cm}{!}{
%
%
%
%
\begin{tikzpicture}

\begin{semilogyaxis}[
xlabel={$K$},
ylabel={Relative error},
xmin=2390, xmax=2600,
ymin=0.001, ymax=0.1,
axis on top,
xmajorgrids,
ymajorgrids
]
\addplot [blue]
coordinates {
(2400,0.0065595471656655)
(2450,0.00512285124444344)
(2500,0.00406659273968063)
(2550,0.0033109432730654)
(2600,0.00273227367237756)

};
\addplot [green!50.0!black]
coordinates {
(2400,0.0265561267505322)
(2450,0.0211558815448709)
(2500,0.0148324756307694)
(2550,0.0116474011052944)
(2600,0.00916030524703084)

};
\path [draw=black, fill opacity=0] (axis cs:13,1)--(axis cs:13,1);

\path [draw=black, fill opacity=0] (axis cs:1,13)--(axis cs:1,13);

\path [draw=black, fill opacity=0] (axis cs:13,0)--(axis cs:13,0);

\path [draw=black, fill opacity=0] (axis cs:0,13)--(axis cs:0,13);

\end{semilogyaxis}

\end{tikzpicture}
}
\label{fig:25doptionerrors}
}
\caption{Convergence of the upper and lower bounds for the 25-to-1-dimensional Black-Scholes model and the resulting relative errors.
\label{fig:25dFigures}}
\end{minipage}
\end{center}
\end{figure}

With the 25-dimensional model, we continue to observe numerical performance of a few percent of relative errors with the projected stopping rule for basket put options of maturity $T= \frac{1}{2}$ as well as a significant early-exercise premium clearly exceeding the accuracy of the method.
Results for the option price estimates for the American and European options and the corresponding error bounds are presented in Figures \ref{fig:25doptionprices} and \ref{fig:25doptionerrors}, respectively.
To demonstrate the consistency and robustness of our approach towards the particular choice of parameters, we replicate the runs multiple times with various portfolio weights.
The results of these repeated trials are illustrated in Figure \ref{fig:d25scatters}.

\begin{figure}
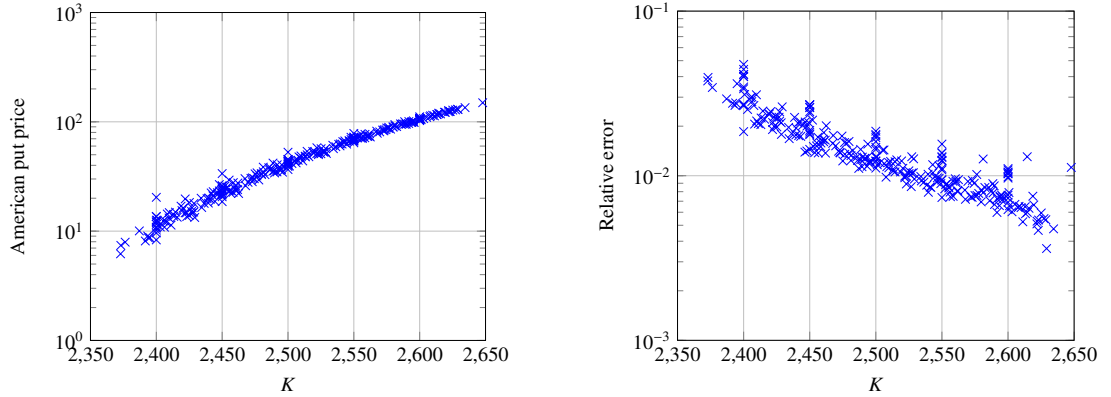

\begin{center}
\begin{minipage}{150mm}
\subfigure{
\resizebox*{7cm}{!}{
\input{./d25_scatterp_1491930791.tikz}
}
\label{fig:25dpriceScatter}
}
\hspace{5 mm}
\subfigure{
\resizebox*{7cm}{!}{
\input{./d25_scatter_1491930791.tikz}
}
\label{fig:25derrorScatter}
}
\caption{
American put prices (\ref{fig:25dpriceScatter}) and corresponding relative errors (\ref{fig:25derrorScatter}) for 43 independent randomized repetitions on evaluating the American put on the 25-to-1-dimensional Black-Scholes model with 258 individual option price valuations for varying strike, $K$. In addition to the random structure of the test problem by \citet{bayer2016smoothing} and the parameters \eqref{eq:d25pars} and $T= \frac{1}{4}$, we also randomize the portfolio weights. For each of the runs, we choose $P_{1i}$ independently from an uniform distribution $U [\frac{1}{2},\frac{3}{2}]$ and finally rescale the weights so that $\ssum{i=1}{25} P_{1i} = 25$.
\label{fig:d25scatters}}
\end{minipage}
\end{center}
\end{figure}

We note that even though we have not proven asymptotic convergence for a general multivariate model, the approximation of the true problem with the one-dimensional stopping rule gives consistently results that are comparable to the bid-ask spread of the most liquid American index options, and well below those of less-liquid regional indices and ETFs tracking them.
We also note that the relative accuracy for the American put price is greatest in the crucial region of in-the money, where the violation of the put-call parity is most profound.

\section{Conclusions}
\label{sec:conclusions}

In this work, we have demonstrated the practicability of using Markovian Projection in the framework of pricing American options written on a basket.
In the implementation of the numerical examples, we have exploited the explicitly known density of the Black-Scholes model, as well as the specific structure of the Bachelier model.
Using the known density, we devised a Laplace approximation to evaluate the volatility of a Markovian projection process that describes the projected and approximate dynamics of the basket.

We have shown that for the Bachelier model the Markovian projection gives rise to exact projected option prices, even when considering options with path-dependence.
We have also demonstrated how the vanishing derivatives of the cost-to-go function are a manifest of the process dynamics, not the early exercise nature of the option.
Leveraging this result, we have demonstrated the existence of nontrivial characterisations of the Black-Scholes model that are essentially of low dimension.

Using the Markovian projection in conjunction with the Laplace approximation, we have implemented low-dimensional approximations of various parametrizations of the multivariate Black-Scholes model.
With numerical experiments, we have shown that these approximations perform surprisingly well in evaluating prices of American options written on a basket.
We interpret these results as a manifestation of the Black-Scholes model being well approximated by a corresponding Bachelier model.
What sets these results apart from many of the earlier works is the fact that we approximate the full trajectory of a basket of assets in the Black-Scholes model, not only instantaneous returns.

The primary method used to solve such problems so far has been the least-squares Monte Carlo method that shares some common attributes with our proposed method.
Unlike least-squares Monte Carlo, our proposed method does not rely on a choice of basis vectors that are used to evaluate the holding price of an option, but only on the direction or directions along which we evaluate the projected dynamics.

Our results leave the door open for future developments including the extension of the current research into models beyond the GBM model.
We validate the accuracy of our stopping rule using a forward-simulation.
One possible extension of this work would be to use the forward sample also to evaluate the projected volatilities, an approach used in calibration of correlation structures by \citet{guyon2015cross}.
As the only non-controlled error in our method is the bias incurred in evaluating the local volatility $\pvola$, the possibility to implement such an evaluation efficiently but without introducing bias would be very useful.
From a theoretical viewpoint, our work raises the question of whether the approximation improves if the projection dimension is increased.

In this work, we have not aimed to demonstrate the use of Markovian-projected models for evaluating implied stopping times.
In doing so, we have not aimed for the greatest possible computational efficiency, and many possibilities for further optimization exist in this area.
In terms of orders of convergence, the bottleneck of the computation is the forward Euler simulation and subsequent evaluation of maxima and hitting times of realizations of an SDE.
These Monte Carlo methods could be enhanced through adaptivity, multi-level methods, use of quasi-Monte Carlo \citep{birge1994quasi,joy1996quasi}, or analytic approximations.
Likewise, there is a possibility for optimization of the numerical solver to evaluate the value function using a highly optimized backward solver \citep{khaliq2008adaptive}.
For the possibility of extending the projection to higher dimensions to allow for higher-dimensional approximation of the early exercise boundary, we refer reader to \citep{hager2010numerical}.
We also note the possibility of using a binomial tree method \citep{joshi2007convergence}, that naturally takes into account the shape of the domain $\prq D$ for the projected PDE.

We have focused on the commercially most relevant application of American options that are widely quoted on the market.
For the case of binary options the analysis remains identical, only the functional form of the payoff $g$ changes.
It would also be of interest to study the performance of the Markovian-projected dynamics in pricing other path-dependent options such as Asian and knockoff options.
Study of more general payoff functions is possible, assuming the projected volatility corresponding to these state variables could be efficiently evaluated.

\medskip

We thank Professors Ernesto Mordecki and Fabi\'an Crocce for their feedback which significantly improved this manuscript.
Gillis Danielsen provided much-valued practitioner's views.

\appendix

\section{Laplace approximation}
\label{sec:alternates}

The Taylor expansion of the integrands in equation \eqref{eq:laplaceIntegral} can be done in various ways, and we discuss and illustrate some natural choices here. For the test case, let us consider the equal-volatility, equal weight non-correlated two-dimensional Black-Scholes model with $r = 0$ and
\begin{align*}
\pone =& [1,1], 
\\
\mmat \Sigma =& \mathoper{diag} \parent{ \transpose{[\sigma , \sigma]}},
\\
\eft 0 =& \transpose{[100,100]},
\end{align*}
with the volatility, $\sigma=0.1$.
For such a simple test case, we can evaluate the relevant expansion by hand. For a high-dimensional model, we need to resort to quadratures or Monte Carlo.

Fixing the portfolio value to $\poneft 0$, the relevant unimodal integrands in terms of the natural price of the second asset $s_2$ are given as
\begin{equation}
f_1 \parent{s_2}=
\frac{\expf{
-\frac{\parent{\mathoper{log} \parent{2- \frac{s_2}{100}}}^2}{2 \sigma^2}
-\frac{\parent{\mathoper{log} \frac{s_2}{100}}^2}{2 \sigma^2}
+ 2\mathoper{log} 
\sigma
+ \mathoper{log}
\parent{
s_2^2 - 400s^2 +40000
}
- \mathoper{log} \parent{200-s_2}
- \mathoper{log} s_2
}
}{2 \pi \sigma^2}
\label{eq:volatilityInPrice}
\end{equation}
for the numerator and 
\begin{align}
\tilde f_1 \parent{s_2}
\frac{\expf{
-\frac{\parent{\mathoper{log} \parent{2- \frac{s_2}{100}}}^2}{2 \sigma^2}
-\frac{\parent{\mathoper{log} \frac{s_2}{100}}^2}{2 \sigma^2}
- \mathoper{log} \parent{200-s_2}
- \mathoper{log} s_2
}
}{2 \pi \sigma^2}
\label{eq:densityInPrice}
\end{align}
for the denominator. Alternatively, we can express the integrals in terms of log-price $x_2 = \mathoper{log} \frac{s_2}{100}$,
\begin{align}
f_2 \parent{x_2}
=
\frac{
- \frac{\parent{\mathoper{log} \parent{2- \expfs{x_2}}}^2}{2 \sigma^2}
- \frac{x_2^2}{2 \sigma^2}
+ 2 \mathoper{log} \sigma
+ \mathoper{log} \parent{2 \expfs{2 x_2} - 4 \expfs{x_2} + 4}
- \mathoper{log} \parent{2-\expfs{x_2}}
- x_2
}{2 \pi \sigma^2}
\label{eq:volatilityInLogPrice}
\end{align}
for the numerator and
\begin{align}
\tilde f_2 \parent{x_2}
=
\frac{
- \frac{\parent{\mathoper{log} \parent{2- \expfs{x_2}}}^2}{2 \sigma^2}
- \frac{x_2^2}{2 \sigma^2}
- \mathoper{log} \parent{2-\expfs{x_2}}
- x_2
}{2 \pi \sigma^2}
\label{eq:densityInLogPrice}
\end{align}
for the denominator. With these definitions we have the unit-time projected volatility
\begin{align*}
\parent{\pvola}^2 \parent{1,200} = \frac{ \int_{\mathbb R} f_1 \parent{s_2} \ddif s_2}{\int_{\mathbb R} \tilde f_1 \parent{s_2} \ddif s_2} = \frac{\int_{\mathbb R} f_2 \parent{x_2} \ddif x_2}{\int_{\mathbb R} f_2 \parent{x_2} \ddif x_2}.
\end{align*}

The integrands $f_1$, $\tilde f_1$ and their respective second-order approximations of the form $\expf{\eta+\kappa \parent{z_2-z^*}^2}$ are illustrated in Figure \ref{fig:laplace1} for the price expansion and in Figure \ref{fig:laplace2}, log-price respectively.

\begin{figure}
\begin{center}
\begin{minipage}{150mm}
\subfigure[]{
\resizebox*{7cm}{!}{
\input{./lap1d_a_1490612491.tikz}
}
\label{fig:laplacePriceDens}
}
\hspace{5 mm}
\subfigure[]{
\resizebox*{7cm}{!}{
\input{./lap1v_a_1490612491.tikz}
}
\label{fig:laplacePriceVol}
}
\caption{Functions $f_1$ of \eqref{eq:densityInPrice} (\ref{fig:laplacePriceDens}) and $\tilde f_1$ of \eqref{eq:volatilityInPrice} (\ref{fig:laplacePriceVol}) in blue and their respective approximations based on the second-order Taylor expansions of their logarithms in dashed red.
\label{fig:laplace1}
}
\end{minipage}
\end{center}
\end{figure}

\begin{figure}
\begin{center}
\begin{minipage}{150mm}
\subfigure[]{
\resizebox*{7cm}{!}{
\input{./lap2d_a_1490612491.tikz}
}
\label{fig:laplacelPriceDens}
}
\hspace{5 mm}
\subfigure[]{
\resizebox*{7cm}{!}{
\input{./lap2v_a_1490612491.tikz}
}
\label{fig:laplacelPriceVol}
}
\caption{Functions $f_2$ of \eqref{eq:densityInLogPrice} (\ref{fig:laplacelPriceDens}) and $\tilde f_2$ of \eqref{eq:volatilityInLogPrice} (\ref{fig:laplacelPriceVol}) in blue and and their respective approximations based on the second-order Taylor expansions of their logarithms in dashed red.
\label{fig:laplace2}
}
\end{minipage}
\end{center}
\end{figure}

\begin{samepage}
The approximations are given as
\begin{align*}
2 \pi \sigma^2 f_1 \parent{s_2} \approx & \expf{2 \mathoper{log} \sigma - \parent{\frac{1}{100^2 \sigma^2}+\frac{2}{100^2}} \parent{s_2-100}^2},
\\
2 \pi \sigma^2 \tilde f_1 \parent{s_2} \approx & \expf{- 2 \mathoper{log} 200 - \parent{\frac{1}{100^2 \sigma^2}+\frac{1}{100^2}} \parent{s_2-100}^2}, 
\\
2 \pi \sigma^2 f_2 \parent{x_2} \approx & \expf{ 2\mathoper{log} \sigma + \mathoper{log} 200  - \parent{\frac{1}{\sigma^2}+2} x_2^2},
\\
2 \pi \sigma^2 \tilde f_2 \parent{x_2} \approx & \expf{- \parent{\frac{1}{\sigma^2}+1} x_2^2}
,
\end{align*}
giving for both approximations
\begin{align*}
\tilde b_1^2 \parent{1,100} = \tilde b_2^2 \parent{1,100} = 20000 \sigma^2 
\sqrt{\frac{1 + 2 \sigma^2}{1+ \sigma^2}} \approx 200.99.
\end{align*}
In contrast, with quadrature, we get a reference value of $200.98$, giving a close agreement with the Laplace-approximated value.

\end{samepage}


\begin{thebibliography}{49}
\providecommand{\natexlab}[1]{#1}
\providecommand{\noopsort}[1]{}
\providecommand{\printfirst}[2]{#1}
\providecommand{\singleletter}[1]{#1}
\providecommand{\switchargs}[2]{#2#1}

\bibitem[\protect\citeauthoryear{Achdou and
  Pironneau}{2005}]{achdou2005computational}
Achdou, Y. and Pironneau, O., {\itshape Computational methods for option
  pricing}, 2005, SIAM.

\bibitem[\protect\citeauthoryear{Ametrano and Ballabio}{2003}]{Ame2003}
Ametrano, F. and Ballabio, L., QuantLib - a free/open-source library for
  quantitative finance. , 2003.

\bibitem[\protect\citeauthoryear{Andersen}{1999}]{andersen1999simple}
Andersen, L.B., A simple approach to the pricing of Bermudan swaptions in the
  multi-factor Libor market model. {\itshape Available at SSRN 155208}, 1999.

\bibitem[\protect\citeauthoryear{Bally
  {\itshape{et~al.}}}{2005}]{bally2005quantization}
Bally, V., Printems, J. {\itshape et~al.}, A quantization tree method for
  pricing and hedging multidimensional {A}merican options. {\itshape
  Mathematical finance}, 2005, \textbf{15}, 119--168.

\bibitem[\protect\citeauthoryear{{B}arraquand and
  {M}artineau}{1995}]{barraquand1995numerical}
{B}arraquand, J. and {M}artineau, D., Numerical valuation of high dimensional
  multivariate {A}merican securities. {\itshape Journal of financial and
  quantitative analysis}, 1995, \textbf{30}, 383--405.

\bibitem[\protect\citeauthoryear{{B}ayer and
  {L}aurence}{2014}]{bayer2014asymptotics}
{B}ayer, C. and {L}aurence, P., Asymptotics beats {M}onte {C}arlo: The case of
  correlated local vol baskets. {\itshape Communications on Pure and Applied
  Mathematics}, 2014, \textbf{67}, 1618--1657.

\bibitem[\protect\citeauthoryear{Bayer
  {\itshape{et~al.}}}{2016}]{bayer2016smoothing}
Bayer, C., Siebenmorgen, M. and Tempone, R., Smoothing the payoff for efficient
  computation of basket option prices. {\itshape arXiv preprint
  arXiv:1607.05572}, 2016.

\bibitem[\protect\citeauthoryear{{B}ayer
  {\itshape{et~al.}}}{2010}]{bayer2010adaptive}
{B}ayer, C., {S}zepessy, A. and {T}empone, R., Adaptive weak approximation of
  reflected and stopped diffusions. {\itshape {M}onte {C}arlo Methods and
  Applications}, 2010, \textbf{16}, 1--67.

\bibitem[\protect\citeauthoryear{{B}elomestny
  {\itshape{et~al.}}}{2015}]{belomestny2015pricing}
{B}elomestny, D., {D}ickmann, F. and {N}agapetyan, T., {P}ricing {B}ermudan
  options via multilevel approximation methods. {\itshape SIAM Journal on
  Financial Mathematics}, 2015, \textbf{6}, 448--466.

\bibitem[\protect\citeauthoryear{{B}irge}{1994}]{birge1994quasi}
{B}irge, J.R., Quasi-{M}onte {C}arlo approaches to option pricing. {\itshape
  Ann Arbor}, 1994, \textbf{1001}, 48109.

\bibitem[\protect\citeauthoryear{{B}lack and
  {S}choles}{1973}]{black1973pricing}
{B}lack, F. and {S}choles, M., The pricing of options and corporate
  liabilities. {\itshape The journal of political economy}, 1973, pp. 637--654.

\bibitem[\protect\citeauthoryear{{B}roadie and
  {G}lasserman}{1997}]{broadie1997pricing}
{B}roadie, M. and {G}lasserman, P., Pricing {A}merican-style securities using
  simulation. {\itshape Journal of Economic Dynamics and Control}, 1997,
  \textbf{21}, 1323--1352.

\bibitem[\protect\citeauthoryear{Buchmann}{2003}]{buchmann2003computing}
Buchmann, F., Computing exit times with the Euler scheme. In {\itshape
  Proceedings of the }{\itshape Seminar f{\"u}r Angewandte Mathematik,
  Eidgen{\"o}ssische Technische Hochschule}, 2003.

\bibitem[\protect\citeauthoryear{{C}hoi and
  {M}arcozzi}{2001}]{choi2001numerical}
{C}hoi, S. and {M}arcozzi, M.D., A numerical approach to {A}merican currency
  option valuation. {\itshape The Journal of Derivatives}, 2001, \textbf{9},
  19--29.

\bibitem[\protect\citeauthoryear{{C}ont}{2001}]{cont2001empirical}
{C}ont, R., Empirical properties of asset returns: stylized facts and
  statistical issues. {\itshape Quantitative Finance}, 2001, \textbf{1},
  223--236.

\bibitem[\protect\citeauthoryear{{C}ox}{1975}]{cox1975notes}
{C}ox, J., Notes on option pricing {I}: Constant elasticity of variance
  diffusions. {\itshape Unpublished note, Stanford University, Graduate School
  of Business}, 1975.

\bibitem[\protect\citeauthoryear{Djehiche and
  L{\"o}fdahl}{2014}]{djehiche2014risk}
Djehiche, B. and L{\"o}fdahl, B., Risk aggregation and stochastic claims
  reserving in disability insurance. {\itshape Insurance: Mathematics and
  Economics}, 2014, \textbf{59}, 100--108.

\bibitem[\protect\citeauthoryear{{F}ama}{1965}]{fama1965behavior}
{F}ama, E.F., The behavior of stock-market prices. {\itshape The journal of
  Business}, 1965, \textbf{38}, 34--105.

\bibitem[\protect\citeauthoryear{{F}eng
  {\itshape{et~al.}}}{2007}]{feng2007variational}
{F}eng, L., {K}ovalov, P., {L}inetsky, V. and {M}arcozzi, M., Variational
  methods in derivatives pricing. {\itshape Handbooks in Operations Research
  and Management Science}, 2007, \textbf{15}, 301--342.

\bibitem[\protect\citeauthoryear{{G}iles}{2015}]{giles2015multilevel}
{G}iles, M.B., Multilevel {M}onte {C}arlo methods. {\itshape Acta Numerica},
  2015, \textbf{24}, 259.

\bibitem[\protect\citeauthoryear{{G}lasserman
  {\itshape{et~al.}}}{2004}]{glasserman2004number}
{G}lasserman, P., {Y}u, B. {\itshape et~al.}, Number of paths versus number of
  basis functions in {A}merican option pricing. {\itshape The Annals of Applied
  Probability}, 2004, \textbf{14}, 2090--2119.

\bibitem[\protect\citeauthoryear{{G}outis and
  {C}asella}{1999}]{goutis1999explaining}
{G}outis, C. and {C}asella, G., Explaining the saddlepoint approximation.
  {\itshape The {A}merican Statistician}, 1999, \textbf{53}, 216--224.

\bibitem[\protect\citeauthoryear{{G}runspan}{2011}]{grunspan2011note}
{G}runspan, C., A Note on the Equivalence between the Normal and the Lognormal
  Implied Volatility: A Model Free Approach. {\itshape Available at SSRN
  1894652}, 2011.

\bibitem[\protect\citeauthoryear{{G}uyon}{2015}]{guyon2015cross}
{G}uyon, J., Cross-dependent volatility. {\itshape Available at SSRN 2615162},
  2015.

\bibitem[\protect\citeauthoryear{{G}y{\"o}ngy}{1986}]{gyongy1986mimicking}
{G}y{\"o}ngy, I., Mimicking the one-dimensional marginal distributions of
  processes having an {I}t{\^o} differential. {\itshape Probability theory and
  related fields}, 1986, \textbf{71}, 501--516.

\bibitem[\protect\citeauthoryear{{H}ager
  {\itshape{et~al.}}}{2010}]{hager2010numerical}
{H}ager, C., {H}{\"u}eber, S. and {W}ohlmuth, B.I., Numerical techniques for
  the valuation of basket options and their Greeks. {\itshape The Journal of
  Computational Finance}, 2010, \textbf{13}, 3.

\bibitem[\protect\citeauthoryear{Haugh and Kogan}{2004}]{haugh2004pricing}
Haugh, M.B. and Kogan, L., Pricing {A}merican options: a duality approach.
  {\itshape Operations Research}, 2004, \textbf{52}, 258--270.

\bibitem[\protect\citeauthoryear{{H}ilber
  {\itshape{et~al.}}}{2004}]{hilber2004sparse}
{H}ilber, N., {M}atache, A.M. and {S}chwab, C., Sparse wavelet methods for
  option pricing under stochastic volatility. In {\itshape Proceedings of the
  }, 2004.

\bibitem[\protect\citeauthoryear{{J}oshi}{2007}]{joshi2007convergence}
{J}oshi, M.S., The convergence of binomial trees for pricing the {A}merican
  put. {\itshape Available at SSRN 1030143}, 2007.

\bibitem[\protect\citeauthoryear{{J}oy {\itshape{et~al.}}}{1996}]{joy1996quasi}
{J}oy, C., {B}oyle, P.P. and {T}an, K.S., Quasi-{M}onte {C}arlo methods in
  numerical finance. {\itshape Management Science}, 1996, \textbf{42},
  926--938.

\bibitem[\protect\citeauthoryear{{K}angro and
  {N}icolaides}{2000}]{kangro2000far}
{K}angro, R. and {N}icolaides, R., Far Field Boundary Conditions for
  Black--Scholes Equations. {\itshape SIAM Journal on Numerical Analysis},
  2000, \textbf{38}, 1357--1368.

\bibitem[\protect\citeauthoryear{{K}haliq
  {\itshape{et~al.}}}{2008}]{khaliq2008adaptive}
{K}haliq, A.Q., {V}oss, D.A. and {K}azmi, K., Adaptive $\theta$-methods for
  pricing {A}merican options. {\itshape Journal of Computational and Applied
  Mathematics}, 2008, \textbf{222}, 210--227.

\bibitem[\protect\citeauthoryear{{L}ongstaff and
  {S}chwartz}{2001}]{longstaff2001valuing}
{L}ongstaff, F.A. and {S}chwartz, E.S., Valuing {A}merican options by
  simulation: a simple least-squares approach. {\itshape Review of Financial
  studies}, 2001, \textbf{14}, 113--147.

\bibitem[\protect\citeauthoryear{Mandelbrot}{1997}]{mandelbrot1997variation}
Mandelbrot, B.B., The variation of certain speculative prices. In {\itshape
  Fractals and Scaling in Finance}, pp. 371--418, 1997, Springer.

\bibitem[\protect\citeauthoryear{{M}atache
  {\itshape{et~al.}}}{2004}]{matache2004fast}
{M}atache, A.M., Von~{P}etersdorff, T. and {S}chwab, C., Fast deterministic
  pricing of options on L{\'e}vy driven assets. {\itshape ESAIM: Mathematical
  Modelling and Numerical Analysis}, 2004, \textbf{38}, 37--71.

\bibitem[\protect\citeauthoryear{{M}ehta
  {\itshape{et~al.}}}{2007}]{mehta2007approximating}
{M}ehta, N.B., {W}u, J., {M}olisch, A.F. and {Z}hang, J., Approximating a sum
  of random variables with a lognormal. {\itshape IEEE Transactions on Wireless
  Communications}, 2007, \textbf{6}.

\bibitem[\protect\citeauthoryear{{M}elino and
  {T}urnbull}{1991}]{melino1991pricing}
{M}elino, A. and {T}urnbull, S.M., The pricing of foreign currency options.
  {\itshape Canadian Journal of Economics}, 1991, pp. 251--281.

\bibitem[\protect\citeauthoryear{{M}erton
  {\itshape{et~al.}}}{1977}]{merton1977valuation}
{M}erton, R.C., {B}rennan, M.J. and {S}chwartz, E.S., The valuation of
  {A}merican put options. {\itshape The Journal of Finance}, 1977, \textbf{32},
  449--462.

\bibitem[\protect\citeauthoryear{{P}iterbarg}{2003}]{piterbarg2003stochastic}
{P}iterbarg, V., A stochastic volatility forward {L}ibor model with a term
  structure of volatility smiles. {\itshape Working Paper, Bank of America},
  2003.

\bibitem[\protect\citeauthoryear{{P}iterbarg}{2006}]{piterbarg2006markovian}
{P}iterbarg, V., Markovian projection method for volatility calibration.
  {\itshape Available at SSRN 906473}, 2006.

\bibitem[\protect\citeauthoryear{{P}iterbarg}{2005}]{piterbarg2005stochastic}
{P}iterbarg, V.V., Stochastic Volatility Model with Time-dependent Skew.
  {\itshape Applied Mathematical Finance}, 2005, \textbf{12}, 147--185.

\bibitem[\protect\citeauthoryear{{R}ogers}{2002}]{rogers2002monte}
{R}ogers, L.C., {M}onte {C}arlo valuation of {A}merican options. {\itshape
  Mathematical Finance}, 2002, \textbf{12}, 271--286.

\bibitem[\protect\citeauthoryear{Schachermayer and
  Teichmann}{2008}]{schachermayer2008close}
Schachermayer, W. and Teichmann, J., How close are the option pricing formulas
  of {B}achelier and {B}lack--{M}erton--{S}choles?. {\itshape Mathematical
  Finance}, 2008, \textbf{18}, 155--170.

\bibitem[\protect\citeauthoryear{{S}hun and
  {M}c{C}ullagh}{1995}]{shun1995laplace}
{S}hun, Z. and {M}c{C}ullagh, P., Laplace approximation of high dimensional
  integrals. {\itshape Journal of the Royal Statistical Society. Series B
  (Methodological)}, 1995, pp. 749--760.

\bibitem[\protect\citeauthoryear{{S}ullivan and
  {W}eithers}{1991}]{sullivan1991louis}
{S}ullivan, E.J. and {W}eithers, T.M., {L}ouis {B}achelier: {T}he father of
  modern option pricing theory. {\itshape The Journal of Economic Education},
  1991, \textbf{22}, 165--171.

\bibitem[\protect\citeauthoryear{{T}homson}{2016}]{thomson2016option}
{T}homson, I.A., Option Pricing Model: Comparing {L}ouis {B}achelier with
  {B}lack-{S}choles {M}erton. {\itshape Available at SSRN 2782719}, 2016.

\bibitem[\protect\citeauthoryear{{Y}osida}{1953}]{yosida1953fundamental}
{Y}osida, K., On the fundamental solution of the parabolic equation in a
  Riemannian space. {\itshape Osaka Mathematical Journal}, 1953, \textbf{5},
  65--74.

\bibitem[\protect\citeauthoryear{{Z}anger}{2013}]{zanger2013quantitative}
{Z}anger, D.Z., Quantitative error estimates for a least-squares {M}onte
  {C}arlo algorithm for {A}merican option pricing. {\itshape Finance and
  Stochastics}, 2013, \textbf{17}, 503--534.

\bibitem[\protect\citeauthoryear{{Z}anger}{2016}]{zanger2016convergence}
{Z}anger, D.Z., Convergence of a least-squares {M}onte {C}arlo algorithm for
  {A}merican option pricing with dependent sample data. {\itshape Mathematical
  Finance}, 2016.

\end{thebibliography}
\end{document}